\begin{document}

\TitleNECO{Bayesian Modeling of Motion Perception using Dynamical Stochastic Textures}

\AuthorNECO{Jonathan Vacher}{1,4,5},  \quad 
\AuthorNECO{Andrew Isaac Meso}{2,3}, \\
\AuthorNECO{Laurent U. Perrinet}{2,5} 
 \quad {\bf \large and}  \quad 
\AuthorNECO{Gabriel Peyr\'e}{1,5}
\vspace{5mm}\\
\AffilNECO{1}{D\'epartement de Math\'ematiques et Applications,\\ \'Ecole Normale Sup\'erieure, Paris, FRANCE}
\AffilNECO{2}{Institut de Neurosciences de la Timone, Marseille, FRANCE}
\AffilNECO{3}{Faculty of Science and Technology, Bournemouth University,\\ Poole, UNITED KINGDOM}
\AffilNECO{4}{UNIC, Gif-sur-Yvette, FRANCE}
\AffilNECO{5}{CNRS, FRANCE}
\KeywordsNECO{Dynamic textures, Motion perception, Bayesian Modelling, Stochastic Partial Differential Equations, Psychophysics}

\ShortTitleNECO{Biologically Inspired Dynamic Textures}


\begin{center} {\bf Abstract} \end{center}

A common practice to account for psychophysical biases in vision is to frame them as consequences of a dynamic process relying on optimal inference with respect to a generative model. The present study details the complete formulation of such a generative model intended to probe visual motion perception with a dynamic texture model.
It is first derived in a set of axiomatic steps constrained by biological plausibility. 
We extend previous contributions by detailing three equivalent formulations of this texture model. First, the composite dynamic textures are constructed by the random aggregation of warped patterns, which can be viewed as 3D Gaussian fields. Secondly, these textures are cast as solutions to a stochastic partial differential equation (sPDE). This essential step enables real time, on-the-fly texture synthesis using time-discretized auto-regressive processes. 
It also allows for the derivation of a local motion-energy model, which corresponds to the log-likelihood of the probability density. The log-likelihoods are  essential for the construction of a Bayesian inference framework. 
We use the dynamic texture model to psychophysically probe speed perception in humans using zoom-like changes in the spatial frequency content of the stimulus. 
The human data replicates previous findings showing perceived speed to be positively biased by spatial frequency increments.
A Bayesian observer who combines a Gaussian likelihood centered at the true speed and a spatial frequency dependent width with a ``slow speed prior'' successfully accounts for the perceptual bias. More precisely, the bias arises from a decrease in the observer's likelihood width estimated from the experiments as the spatial frequency increases. Such a trend is compatible with the trend of the dynamic texture likelihood width.
\section{Introduction}

\subsection{Modeling visual motion perception }

A normative explanation for the function of perception is to infer relevant unknown real world parameters from the sensory input with respect to a generative model~\citep{Gregory80}. Equipped with some prior knowledge about both the nature of neural representations and the structure of the world, the modeling approach that emerges corresponds to the \emph{Bayesian brain} hypothesis~\citep{knill2004bayesian,doya2007bayesian,colombo2012bayes,kersten2004object}. This assumes that when given some sensory information $S$, the brain uses neural computations which ultimately conform with Bayes' theorem :
\eql{\label{eq-bayes}
\PP_{M|S}(m|s) = \frac{\PP_{S|M}(s|m)\PP_M(m)}{\PP_S(s)}.
}
This computation yields an estimate of the parameters $m$ where the probability distribution function $\PP_{S|M}$ is given by the generative model and $ \PP_M$ represents prior knowledge.
This hypothesis has been well illustrated with the case of motion perception~\citep{Weiss02}. This framework uses a Gaussian parameterization of the generative model and a unimodal (Gaussian) prior in order to estimate perceived speed $v$ when observing a visual input $I$. 

However, Gaussian likelihoods and priors do not always fit with psychophysical results~\citep{Wei12,hassan2015perceptual}. As such, a major challenge is to refine the construction of generative models so that they are consistent with the widest variety of empirical results.  

In fact, the estimation problem inherent to perception is successfully solved, in part, through the definition of an adequate generative model. Probably the simplest generative model to describe visual motion is the luminance conservation equation~\citep{Adelson85}. It states that luminance $I(x,t)$ for $(x,t) \in \RR^2\times \RR$ is approximately conserved along trajectories defined as integral lines of a vector field $v(x,t) \in \RR^2\times \RR$. The corresponding generative model defines random fields as solutions to the stochastic partial differential equation (sPDE), 
\eql{\label{eq-luminance}
	\dotp{v}{\nabla I} + \pd{I}{t}  = W,
} 
where $\dotp{\cdot}{\cdot}$ denotes the Euclidean scalar product in $\RR^2$, $\nabla I$ is the spatial gradient of~$I$. To match the distribution of spatial scale statistics of natural scenes (\textit{ie} the 1/f amplitude fall-off of spatial frequencies) or some alternative category of textures, the driving term $W$ is usually defined as a stationary colored Gaussian noise corresponding to the average localized spatio-temporal correlation (which we refer to as the spatio-temporal coupling), and is parameterized by a covariance matrix $\Sigma$, while the field is usually a constant vector $v(x, t)=v_0$ accounting for a full-field translation with constant speed. 

Ultimately, the application of this generative model is useful for probing the visual system with a probabilistic approach, for instance for one seeking to understand how observers might detect motion in a scene.  %
Indeed, as shown by~\textcite{Nestares00,Weiss02}, the negative log-likelihood of the probability distribution of the solutions $I$ to the luminance conservation equation~\eqref{eq-luminance} (on domain $\Om \times [0,T]$ and for constant speed $v(x,t)=v_0$) is proportional to the value of the motion-energy model~\citep{Adelson85} given by 
\eql{\label{energy}
	 \int_\Om \int_0^T |\dotp{v_0}{\nabla (K \star I)(x,t)} + \pd{(K \star I)}{t}(x,t)  |^2 \d t \, \d x
} 
where $K$ is the whitening filter corresponding to the inverse square root of $\Si$, and $\star$ is the convolution operator. Using some prior knowledge about the expected distribution of motions, for instance a preference for slow speeds, a Bayesian formalization can be applied to this inference problem~\citep{Weiss01,Weiss02}. 

\subsection{Previous Works in Context}

\paragraph{Dynamic Texture Synthesis.}

The model defined in equation~\eqref{eq-luminance} is quite simplistic compared to the complexity of natural scenes. It is therefore useful here to discuss generative models associated with texture synthesis methods previously proposed in the computer vision and computer graphics community. 
Indeed, the literature on the subject of static textures synthesis is abundant (see for instance~\citep{WLKT09}). %
Of particular interest for us is the work by~\textcite{Galerne11,GalernePHD}, which proposes a stationary Gaussian model restricted to static textures. This provides an equivalent generative model based on Poisson shot noise. %
Realistic dynamic texture models have received less attention, and the most prominent method is the non-parametric Gaussian auto-regressive (AR) framework developed by~\textcite{Doretto-Dyntex}, which has been thoroughly explored~\citep{2014-xia-siims,yuan2004synthesizing,costantini2008higher,filip2006fast,hyndman2007higher,abraham2005dynamic}. These works generally consist in finding an appropriate low-dimensional feature space in which an AR process models the dynamics. Many of these approaches focus on the feature space where the decomposition is efficiently performed using Singular Value Decomposition (SVD) or its Higher Order version (HOSVD)~\citep{Doretto-Dyntex,costantini2008higher}. In~\textcite{abraham2005dynamic}, the feature space is the Fourier frequency domain, and the AR recursion is carried independently over each frequency, which defines the space-time stationary processes. 
A similar approach is used in~\citep{2014-xia-siims} to compute the average of several dynamic texture models. 
Properties of these AR models have been studied by~\textcite{hyndman2007higher} who find that higher order AR processes are able to capture perceptible temporal features. A different approach aims at learning the manifold structure of a given dynamic texture~\citep{liu2006dynamic} while yet another deals with motion statistics~\citep{rahman2008dynamic}. What all these works have in common is the aim to reproduce the natural spatio-temporal behavior of dynamic textures with rigorous mathematical tools. Similarly, our concern is to design a dynamic texture model that is precisely parametrized for experimental purposes in visual neuroscience and psychophysics.

\paragraph{Stochastic Differential Equations (sODE and sPDE).}

Stochastic Ordinary Differential Equations (sODE) and their higher dimensional counter-parts, stochastic Partial Differential Equations (sPDE) can be viewed as continuous-time versions of these 1-D or higher dimensional auto-regressive (AR) models. Conversely, AR processes can therefore also be used to compute numerical solutions to these sPDE using finite difference approximations of time derivatives.  
Informally, these equations can be understood as partial differential equations perturbed by a random noise. The theoretical and numerical study of these sDE is of fundamental interest in fields as diverse as physics and chemistry~\citep{van1992stochastic}, finance~\citep{el1997backward} or neuroscience~\citep{fox1997stochastic}. They allow for the dynamic study of complex, irregular and random phenomena such as particle interactions, stock or saving prices, or ensembles of neurons. 
In psychophysics, sODE have been used to model  decision making tasks in which the stochastic variable represents the accumulation of knowledge until the decision is made, thus providing detailed information about predicted response times \citep{smith2000stochastic}. 
In imaging sciences, sPDE with sparse non-Gaussian driving noise have been proposed as models of natural signals and images~\citep{UnserBook}. 
As described above, the simple motion energy model~\eqref{energy} can similarly be demonstrated to rely on the sPDE equation~\eqref{eq-luminance} of a stochastic model of visual sensory input. This has not previously been presented in a formal way in the literature. One key goal of this paper is to comprehensively formulate a parametric family of Gaussian sPDEs which describes the modeling of moving images (and the corresponding synthesis of visual stimulation) and thus allow for a fine-grained systematic exploration of psychophysical behavior. 

\paragraph{Inverse Bayesian inference.}

Importantly, these dynamic stochastic models are closely related to the likelihood and prior models which serve to infer motion estimates from the dynamic visual stimulation. In order to account for perceptual bias, a now well-accepted methodology in the field of psychophysics is to assume that observers are ``ideal observers'' and therefore make decisions using optimal statistical inference, typically a maximum-a-posteriori (MAP) estimator, using Bayes' formula to combine this likelihood with some internal prior, see equation~\eqref{eq-bayes}. 
Several experimental studies use this hypothesis as a justification for the observed perceptual biases by proposing some adjusted likelihood and prior models \citep{doya2007bayesian,colombo2012bayes}, and more recent works push these ideas even further. Observing some perceptual bias, is it possible to ``invert'' this forward Bayesian decision-making process, and infer the (unknown) internal prior that best fits a set of observed experimental choices made by observers?
Following~\textcite{Stocker06}, we coined this promising methodology ``inverse Bayesian inference''. This is of course an ill-posed, and highly non-linear inverse problem, making it necessary to add constraints on both the prior and the likelihood to make it tractable. For instance~\textcite{SotiropoulosVR, Stocker06,jogan2015signal} impose smoothness constraints in order to be able to locally fit the slope of the prior.  
Herein, we propose to use visual stimulations generated by the (forward) generative model to test these ``inverse'' Bayesian models. To allow for a simple yet mathematically rigorous analysis of this approach within the context of speed discrimination, in the present study we will use a restricted parametric set of descriptors for the likelihood and priors. This provides a self-consistent approach to test the visual system, from stimulation to behavior analysis 

\subsection{Contributions}

In this paper, we lay the foundations that we hope will enable a better understanding of human motion perception by improving generative models for dynamic texture synthesis. From that perspective, we motivate the generation of visual stimulation within a stationary Gaussian dynamic texture model. 

We develop our current model by extending, mathematically detailing and testing in psychophysical experiments previously introduced dynamic noise textures~\citep{Leon12,Simoncini12,vacher2015biologically,gekas2017normalization} coined ``Motion Clouds'' (MC). 
Our first contribution is a complete axiomatic derivation of the model, seen as a shot noise aggregation of dynamically warped ``textons''.
Within our generative model, the parameters correspond to average spatial and temporal transformations (\textit{ie} zoom, orientation and translation speed) and associated standard deviations of random fluctuations, as illustrated in Figure~\ref{fig:warps}, with respect to external (objects) and internal (observer) movements. 
The second main contribution is the explicit demonstration of the equivalence between this model and a class of linear sPDEs. This shows that our model is a generalization of the well-known luminance conservation (see equation~\ref{eq-luminance}). 
This sPDE formulation has two chief advantages: it allows for a real-time synthesis using an AR recurrence (in the form of a GPU implementation) and allows one to recast the log-likelihood of the model as a generalization of the classical motion energy model, which in turn is crucial to allow for Bayesian modeling of perceptual biases. 
Our last contribution follows from the Bayesian approach and is an illustrative application of this model to the psychophysical study of motion perception in humans. 
This example of the model development constrains the likelihood, which in turn enables a simple fitting procedure to be performed using both an empirical and a larger Monte-Carlo derived synthetic dataset to determine the prior driving the perceptual biases. 
The code associated to this work is available at \url{https://github.com/JonathanVacher/projects/tree/master/bayesian_observer}.
\subsection{Notations}
\label{sec:notations}

In the following, we denote $(x,t) \in \RR^2 \times \RR$ the space/time variable, and $(\xi,\tau) \in \RR^2 \times \RR$ the corresponding frequency variables. If $f(x,t)$ is a function defined on $\RR^3$, then its Fourier transform is defined as
\eq{
	\hat f(\xi,\tau) \eqdef \int_{\RR^2} \int_{\RR} f(x,t) e^{-\imath (\dotp{x}{\xi} + \tau t)  } \d t \d x .
}
For $\xi \in \RR^2$, we denote $\xi = \norm{\xi}(\cos( \angle{\xi}),\sin(\angle{\xi})) \in \RR^2$ its polar coordinates. For a function $g$ defined on $\RR^2$, we denote $\bar g(x) = g(-x)$.  
In the following, we denote a random variable with a capital letter such as  $A$ and $a$ as a realization of $A$. We note as $\dis_A(a)$ the corresponding probability distribution of $A$. 

\section{Axiomatic Construction of the Dynamic Textures}
\label{sec-axiomatic}
Dynamic textures which are efficient to probe visual perception should be generated from low-dimensional yet naturalistic parametric stochastic models. They should embed meaningful physical parameters (such as the effect of head rotations or whole-field scene movements, see Figure~\ref{fig:warps}) into the local or global dependencies of the random field (for instance the covariance). 
In the luminance conservation model~\eqref{eq-luminance}, the generative model is parameterized by a spatio-temporal coupling encoded in the covariance $\Sigma$ of the driving noise and the motion flow $\vz$.

This localized space-time coupling (e.g. the covariance, if one restricts one's attention to Gaussian fields) is essential, as it quantifies the extent of the spatial integration area as well as the integration dynamics. This is an important issue in neuroscience, when considering the implementation of spatio-temporal integration mechanisms from very small to very large scales, i.e. going from ‘local’ to ‘global’ visual features~\citep{Rousselet2004,Born2005,DiCarlo2012}.
In particular, this is crucial to understand the modular sensitivity within the different lower visual areas. In primates for instance, the Primary Visual Cortex (V1) generally encodes small features in a given range of spatio-temporal scales. In contrast, ascending the processing hierarchy, the Middle Temporal (V5/MT) area exhibits selectivity for larger visual features.
For instance, by varying the spatial frequency bandwidth of such dynamic textures, distinct mechanisms for perception and action have been identified in humans~\citep{Simoncini12}. Our goal here is to develop a principled axiomatic definition of these dynamic textures.

\begin{figure}[b!]
\vspace{2mm}
\begin{center}
\includegraphics[width=.85\textwidth]{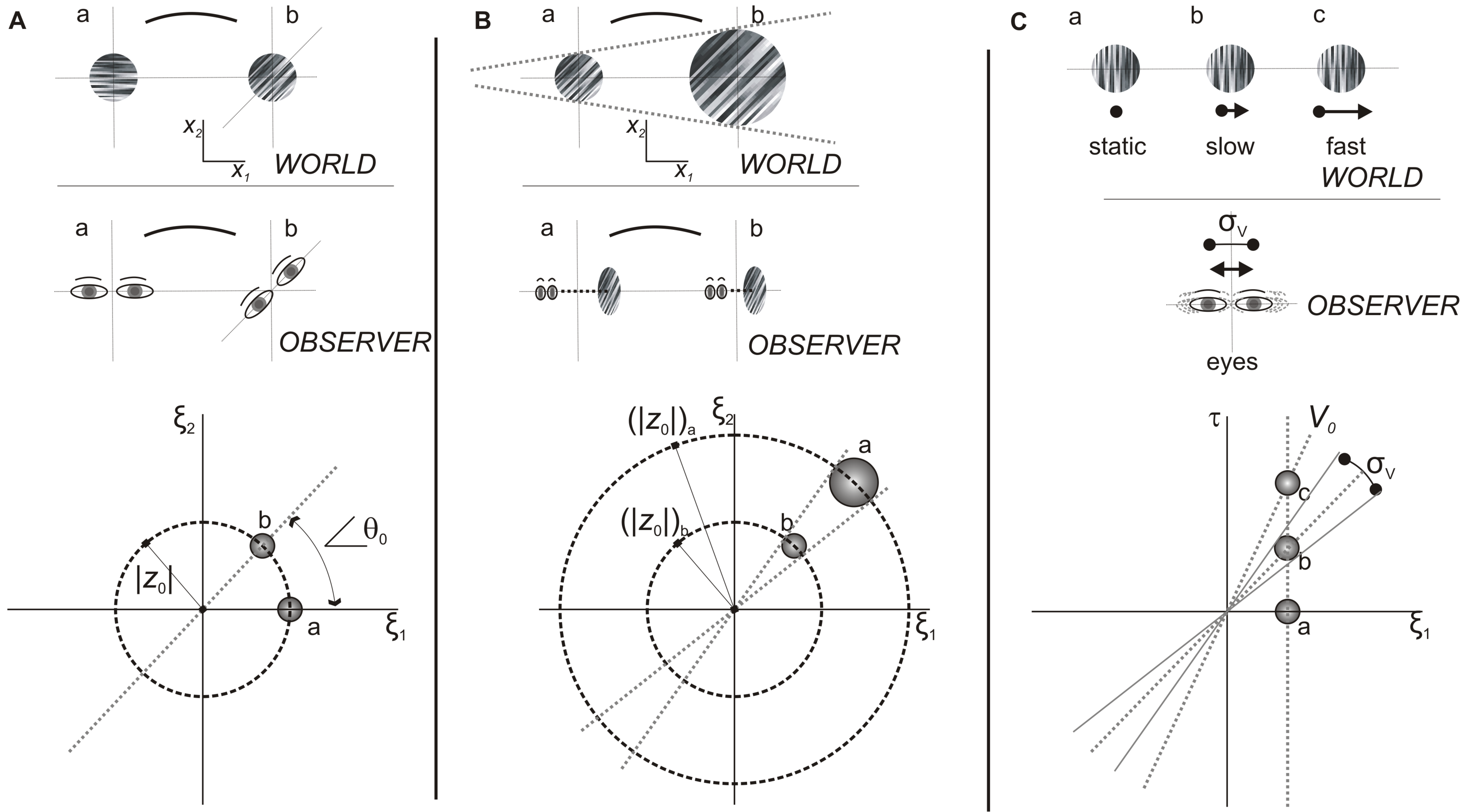}
\end{center}
\caption{\emph{Parameterization of the class of Motion Clouds (MC) stimuli.} The illustration relates the parametric changes in MC with real world (top row) and observer (second row) movements. \textbf{(A)}~Orientation changes resulting in scene rotation are parameterized through $\th$, as shown in the bottom row where $(a)$~horizontal and $(b)$~obliquely oriented MCs are compared. \textbf{(B)}~Zoom movements, either from scene looming or observer movements in depth, are characterized by scale changes reflected by a frequency term $\z$ shown for $(a)$~a more distant viewpoint compared to $(b)$~a closer one. 
\textbf{(C)}~Translational movements in the scene characterized by $V$ using the same formulation for $(a)$~static, $(b)$~slow and $(c)$~fast moving MC, with the variability in these speeds quantified by $\sr$. The variables $\xi$ and $\tau$ in the third row are the spatial and temporal frequency scale parameters. The development of this formulation is detailed in the text. }
\label{fig:warps}
\end{figure}

\subsection{From Shot Noise to Motion Clouds}

We propose a derivation of a general parametric model of dynamic textures. This model is defined by aggregation, through summation, of a basic spatial ``texton'' template $g(x)$. The summation reflects a transparency hypothesis, which has been adopted for instance by~\textcite{Galerne11}. While one could argue that this hypothesis is overly simplistic and does not model occlusions or edges, it leads to a tractable framework of stationary Gaussian textures, which has proved useful to model static micro-textures~\citep{Galerne11} and dynamic natural phenomena~\citep{2014-xia-siims}. The simplicity of this framework allows for a fine tuning of frequency-based (Fourier) parameterization, which is desirable for the interpretation of psychophysical experiments with respect to underlying spatio-temporal neural sensitivity.

We define a random field as 
\eql{\label{eq-deadleaves}
	I_\la(x,t) \eqdef  \frac{1}{\sqrt{\la}} \sum_{p \in \NN} g( \phi_{\Geom_p}( x-X_{p}  - V_p t ) )
}
where $\phi_\geom : \RR^2 \rightarrow \RR^2$ is a planar deformation parameterized by a finite dimensional vector $\geom$. The parameters $(X_p,V_p,\Geom_p)_{p \in \NN}$ are independent and identically distributed random vectors. They account for the variability in the position of objects or observers ($\phi_{A_p}$) and their speed ($V_p$), thus mimicking natural motions in an ambient scene. The set of translations $(X_p)_{p \in \NN}$ is a 2-D Poisson point process of intensity $\la>0$.
This means that, defining for any measurable $A$, $C(A) = \sharp\enscond{p}{X_p \in A}$, $C(A)$ has a Poisson distribution with mean $\la |A|$ (where $|A|$ is the measure of $A$) and $C(A)$ is independent of $C(B)$ if $A \cap B = \emptyset$. 

Intuitively, this model~\eqref{eq-deadleaves} corresponds to a dense mixing of stereotyped static textons as in the work of~\textcite{Galerne11}. In addition to the extension to the temporal domain, the originality of our approach is two-fold. First, the components of this mixing are derived from the texton by visual transformations $\phi_{\Geom_p}$, which may correspond to arbitrary transformations such as zooms and/or rotations (in which case $\Geom_p$ is a vector containing the scale factor and the rotation angle). See the illustration in Figure~\ref{fig:warps}. Second, we explicitly model the motion (position $X_p$ and speed $V_p$) of each individual texton. 

In the following, we denote $\falpha$ the common distribution of the i.i.d. $(\Geom_p)_p$, and we denote $\fv$ the distribution in $\RR^2$ of the speed vectors $(V_p)_p$.  Section~\ref{sec-bio-inspired} instantiates this model and proposes canonical choices for these variabilities. 

The following result shows that the model~\eqref{eq-deadleaves} converges for high point density $\la \rightarrow +\infty$ to a stationary Gaussian field and gives the parameterization of the covariance.
Its proof follows from a specialization of Theorem 3.1 in~\citep{GalernePHD} to our setting. 

\begin{prop}\label{eq-conv-deadleaves}
	$I_\la$ is stationary with bounded second order moments. Its covariance is $\Si(x,t,x',t') = \ga(x-x',t-t')$ where $\ga$ satisfies 
	\eql{\label{eq-cov-proposition}
		\foralls (x,t) \in \RR^3, \quad
		\ga(x,t) = \iiint_{\RR^2} c_g(\phi_\geom(x-\speed t))  \fv(\speed) \falpha(\geom) \d \speed \d \geom
	} 
	where $c_g = g \star \bar g$ is the auto-correlation of $g$. 
	When $\la \rightarrow +\infty$, it converges (in the sense of finite dimensional distributions) toward a stationary Gaussian field $I$ of zero mean and covariance $\Si$. 
\end{prop}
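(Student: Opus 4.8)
The plan is to read~\eqref{eq-deadleaves} as a Poisson shot noise in the spatial variable, with i.i.d.\ marks $(V_p,\Geom_p)$ valued in $\RR^2\times\mathcal{G}$ ($\mathcal{G}$ the finite-dimensional deformation-parameter space), and to derive all three assertions by specializing the shot-noise central limit theorem (Theorem~3.1 of~\citep{GalernePHD}) to this mark space. Writing $h_{(x,t)}(y,\speed,\geom)\eqdef g(\phi_\geom(x-y-\speed t))$, one has $I_\la(x,t)=\la^{-1/2}\sum_{p\in\NN}h_{(x,t)}(X_p,V_p,\Geom_p)$; throughout one works with the centered field $I_\la-\mathbb{E}[I_\la]$ (by stationarity the subtracted quantity is a constant, and when $\int g\neq0$ it diverges like $\sqrt\la$, so it is this centered field that converges — equivalently one assumes the harmless normalization $\hat g(0)=0$). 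Stationarity itself is the easy part: a space--time shift $(x,t)\mapsto(x+x_0,t+t_0)$ turns $I_\la(x+x_0,t+t_0)$ into the same sum with each atom $X_p$ replaced by $X_p+x_0-V_p t_0$, i.e.\ displaced by a deterministic function of its own mark; the mapping/displacement theorem for Poisson processes then gives that the displaced marked configuration has the same law as the original (intensity $\la$ and mark law $\fv\otimes\falpha$ unchanged), so the finite-dimensional laws of $I_\la$ are translation invariant and $\Si(x,t,x',t')$ depends only on $(x-x',t-t')$.

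For the covariance I would apply Campbell's formula. Its first-moment part gives $\mathbb{E}[I_\la]=\sqrt\la\iiint g(\phi_\geom(u))\,\d u\,\fv(\speed)\falpha(\geom)\,\d\speed\,\d\geom$ (constant, after $u=x-y-\speed t$), and its second-moment part gives, for the centered field,
\eq{
\mathrm{Cov}\big(I_\la(x,t),I_\la(x',t')\big)=\iiint\Big(\int_{\RR^2} g(\phi_\geom(x-y-\speed t))\,g(\phi_\geom(x'-y-\speed t'))\,\d y\Big)\fv(\speed)\falpha(\geom)\,\d\speed\,\d\geom,
}
the $\la$ from Campbell's formula cancelling the $\la^{-1}$ from the normalization. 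The substitution $y\mapsto y+x'-\speed t'$ followed by $w=\phi_\geom(y)$ turns the inner integral into $c_g\big(\phi_\geom(x-x'-\speed(t-t'))\big)$ with $c_g=g\star\bar g$, up to the warp Jacobian $|\det D\phi_\geom|^{-1}$, which is absorbed by the normalization of the warped texton (equivalently one restricts to volume-preserving $\phi_\geom$, or to deformation laws for which it is integrable); this is exactly~\eqref{eq-cov-proposition}. In particular $\ga(0,0)=\|g\|_{L^2}^2$ and $|\ga|\le\|g\|_{L^2}^2$ by Cauchy--Schwarz, so the second-order moments are finite, hence bounded by stationarity, once $g\in L^2(\RR^2)$.

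For the Gaussian limit I would use the exponential (L\'evy--Khintchine) formula for functionals of a Poisson process: against $\theta\in\RR^n$, the characteristic function of the marginal $(I_\la(x_1,t_1),\dots,I_\la(x_n,t_n))$ of the centered field has logarithm $\la\iint(e^{\imath\la^{-1/2}G_\theta}-1-\imath\la^{-1/2}G_\theta)\,\d y\,\fv\falpha\,\d\speed\,\d\geom$ with $G_\theta(y,\speed,\geom)\eqdef\sum_k\theta_k\,g(\phi_\geom(x_k-y-\speed t_k))$. The elementary bound $|e^{\imath u}-1-\imath u+\tfrac{u^2}{2}|\le\tfrac{|u|^3}{6}$ shows this equals $-\tfrac{1}{2}\iint G_\theta^2=-\tfrac{1}{2}\theta^\top\Si_n\theta$ (the marginal covariance matrix, by the previous step) plus a remainder $\le\tfrac{1}{6}\la^{-1/2}\iint|G_\theta|^3\to0$. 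By L\'evy's continuity theorem the marginal converges to $\mathcal{N}(0,\Si_n)$, and since this holds for every finite collection of points, $I_\la$ converges in finite-dimensional distributions to the centered Gaussian field of covariance $\Si$, which inherits stationarity (or reads it off from $\ga$).

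The main obstacle --- the only content beyond bookkeeping --- is pinning down the integrability hypotheses that make Campbell's formula, the cumulant expansion, and the changes of variables through the random warps $\phi_\geom$ all legitimate: one needs $g\in L^2\cap L^3(\RR^2)$ (automatic if $g$ is compactly supported) together with a mild moment condition on $\falpha$, e.g.\ $\mathbb{E}_{\falpha}[|\det D\phi_\geom|^{-1}]<\infty$, so that $\iiint|c_g(\phi_\geom(\cdot))|\,\fv\falpha<\infty$ and the $k$-th cumulant of $I_\la$ is genuinely $O(\la^{1-k/2})$; no condition on $\fv$ is needed, since the velocities enter only through a translation. With those hypotheses fixed, the statement is precisely Theorem~3.1 of~\citep{GalernePHD} with mark space $(\speed,\geom)$.
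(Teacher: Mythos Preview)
Your proposal is correct and follows the same route as the paper, which does not give a proof but simply states that the result ``follows from a specialization of Theorem~3.1 in~\citep{GalernePHD} to our setting''; you carry out that specialization explicitly via the displacement theorem, Campbell's formula, and the L\'evy--Khintchine expansion, and you correctly identify the argument as exactly Galerne's shot-noise CLT with mark space $(\speed,\geom)$. Your remarks on centering (the mean of $I_\la$ diverges like $\sqrt\la$ unless $\hat g(0)=0$) and on the warp Jacobian $|\det D\phi_\geom|^{-1}$ hidden in~\eqref{eq-cov-proposition} are legitimate subtleties that the paper's one-line citation glosses over --- note in particular that the $z^{-2}$ factor does reappear in the proof of Proposition~\ref{prop-mc-spectrum}, confirming your reading.
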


This proposition enables us to give a precise definition of a MC. 

\begin{defn}\label{defn-mc-fourier} 
	A Motion Cloud (MC) is a stationary Gaussian field whose covariance is given by equation~\eqref{eq-cov-proposition}. 
\end{defn}

Note that, following~\textcite{Galerne10}, the convergence result of Proposition~\ref{eq-conv-deadleaves} could be used in practice to simulate a Motion Cloud  $I$ using a high but finite value of $\la$ in order to generate a realization of $I_\la$. We do not use this approach, and rather rely on the sPDE characterization proved in Section~\ref{sec-spde-model}, which is well tailored for an accurate and computationally efficient dynamic synthesis.


\subsection{Towards ``Motion Clouds'' for Experimental Purposes}
The previous Section provides a theoretical definition of MC (see Definition~\ref{defn-mc-fourier}) that is characterized by $c_g, \phi_\geom, \falpha$ and $\fv$. In order to have better control of the covariance $\ga$ one needs to resort to a low-dimensional representation of these parameters. We further study this model in the specific case where the warps $\phi_a$ are rotations and scalings (see Figure~\ref{fig:warps}). They account for the characteristic orientations and sizes (or spatial scales) of a scene, in relation to the observer. We thus set
\eql{\label{eq:warping}
	\foralls \geom = (\th,\z) \in [-\pi, \pi) \times \RR_+^{*}, \quad
	\phi_\geom(x) \eqdef \z R_{-\th}(x),  
}
where $R_\th$ is the planar rotation of angle $\th$. We now give some physical and biological motivation to account for our particular choices for the distributions of the parameters. We assume that the distributions $\fz$ and $\ftheta$ of spatial scales $\z$ and orientations $\th$, respectively (see Figure~\ref{fig:warps}), are independent and have densities, thus considering 
\eql{\label{eq:warp-dis}
	\foralls \geom = (\th,\z) \in [-\pi, \pi) \times \RR_+^{*}, \quad
	\falpha(\geom) = \fz(\z) \, \ftheta(\th).
 }
The speed vector $\speed$ is assumed to be randomly fluctuating around a central speed $\vz \in \RR^2$, so that
\eql{\label{eq-distr-V}
	\foralls \speed \in \RR^2, \quad \fv(\speed) = \fr(\norm{\speed - \vz}).
}
In order to obtain ``optimal'' responses to the stimulation (as advocated by~\textcite{young01}) and based on the structure of a standard receptive field of V1, it makes sense to define the texton so that it resembles an oriented Gabor~\citep{Fischer07cv}. Such an elementary luminance feature acts as the generic atom
\eql{\label{eq-texton}
g_\sigma(x) = \frac{1}{2 \pi} \cos\left( \dotp{x}{\xiz} \right) e^{ - \frac{\sigma^2}{2} \norm{x}^2 }
}
where $\sigma$ is the inverse of the standard deviation and $\xiz \in \RR^2$ is the spatial frequency. Since the orientation and scale of the texton is handled by the $(\th,\z)$ parameters, we can impose the normalization $\xiz = (1,0)$ without loss of generality. 
In the special case where $\sigma \rightarrow 0$, $g_\sigma$ is a grating of frequency $\xiz$, and the image $I$ is a dense mixture of drifting gratings, whose power-spectrum has a closed form expression detailed in Proposition~\ref{prop-mc-spectrum}. It is fully parameterized by the distributions $(\fz,\ftheta,\fv)$ and the central frequency and speed $(\xiz,\vz)$. 
Note that it is possible to consider any arbitrary textons $g$, which would give rise to more complicated parameterizations for the power spectrum $\hat g$, but here we decided to stick to the simple asymptotic case of gratings. 


\begin{prop}\label{prop-mc-spectrum}
	Consider the texton $g_\sigma$ , when $\sigma \rightarrow 0$,
	the Gaussian field $I_{\sigma}(x,t)$ defined in Proposition~\ref{eq-conv-deadleaves} converges toward a stationary Gaussian field of covariance having the power-spectrum
	\eql{\label{eq-dfn-mc-spectrum}
		\foralls (\xi,\tau)  \in \RR^2 \times \RR, \:
		\hat \ga(\xi,\tau) = \frac{ \fz\pa{  \norm{\xi} } }{\norm{\xi}^2}
			\ftheta\pa{ \angle{\xi} }
			\Ll(\fr)
			\pa{ 
				-\frac{
					\tau + \dotp{\vz}{\xi}
				}{  
					\norm{\xi} 
				}
			}, 
	}
where the linear transform $\Ll$ is such that 
\eql{\label{eq:linear-trans-spe}
	\foralls u \in \RR,  \quad
	\Ll(f)(u) \eqdef \int_{-\pi}^{\pi} f( -u/ \cos(\phi) ) \d \phi.
}  	
\end{prop}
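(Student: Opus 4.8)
The plan is to obtain $\hat\ga$ as the Fourier transform in $(x,t)$ of the covariance $\ga$ given by~\eqref{eq-cov-proposition} of Proposition~\ref{eq-conv-deadleaves}, specialized to $g=g_\sigma$, to the warps~\eqref{eq:warping} and to the distributions~\eqref{eq:warp-dis} and~\eqref{eq-distr-V}, and then to pass to the limit $\sigma\to0$. Since $g_\sigma$ is even, $c_{g_\sigma}=g_\sigma\star\bar g_\sigma=g_\sigma\star g_\sigma$, so $\widehat{c_{g_\sigma}}=|\hat g_\sigma|^2$, and a short Gaussian computation gives
\[
	\hat g_\sigma(\xi)=\frac{1}{2\sigma^2}\bigl(e^{-\norm{\xi-\xiz}^2/(2\sigma^2)}+e^{-\norm{\xi+\xiz}^2/(2\sigma^2)}\bigr).
\]
From this I expect that, after the renormalization of $I_\sigma$ by a factor of order $\sigma$ (needed since $\ga(0,0)=\norm{g_\sigma}_2^2$ diverges like $\sigma^{-2}$), the measure $|\hat g_\sigma(\xi)|^2\,\d\xi$ converges weakly as $\sigma\to0$ to $c\,(\delta_{\xiz}+\delta_{-\xiz})$ for an explicit $c>0$; equivalently, $g_\sigma$ tends to the grating $\tfrac{1}{2\pi}\cos\dotp{\cdot}{\xiz}$, whose autocorrelation is, as a tempered distribution, proportional to $\cos\dotp{\cdot}{\xiz}$. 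Since convergence in finite-dimensional distributions for centered Gaussian fields follows from pointwise convergence of the (renormalized) covariances, it then suffices to identify the limiting power spectrum.

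First I would Fourier transform~\eqref{eq-cov-proposition} at fixed $\sigma$: with $\phi_\geom(x-\speed t)=\z R_{-\th}(x-\speed t)$, substituting $y=\z R_{-\th}(x-\speed t)$ in the spatial integral turns the inner transform into $\z^{-2}\,\widehat{c_{g_\sigma}}\bigl(\z^{-1}R_{-\th}\xi\bigr)$ times $e^{-\imath t\dotp{\speed}{\xi}}$, whose integral over $t$ is $2\pi\,\delta(\tau+\dotp{\speed}{\xi})$. This gives the pre-limit identity
\[
	\hat\ga(\xi,\tau)=2\pi\iiint \z^{-2}\,\widehat{c_{g_\sigma}}\bigl(\z^{-1}R_{-\th}\xi\bigr)\,\delta(\tau+\dotp{\speed}{\xi})\,\fz(\z)\,\ftheta(\th)\,\fr(\norm{\speed-\vz})\,\d\speed\,\d\z\,\d\th.
\]
Letting $\sigma\to0$, the $(\z,\th)$-integration against $\widehat{c_{g_\sigma}}$ concentrates at $\z^{-1}R_{-\th}\xi=\pm\xiz$, i.e. at $\z=\norm{\xi}$ and $\th=\angle{\xi}$ modulo $\pi$; carrying out this change of variables with its polar Jacobian, and using $\norm{\xiz}=1$, $\angle{\xiz}=0$ together with $\ftheta(\angle{\xi}-\pi)=\ftheta(\angle{\xi})$ (orientation, not direction), the $\z$- and $\th$-integrals leave exactly the factor $\norm{\xi}^{-1}\fz(\norm{\xi})\,\ftheta(\angle{\xi})$.

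It then remains to compute the velocity integral $J(\xi,\tau)=\int_{\RR^2}\delta(\tau+\dotp{\speed}{\xi})\,\fr(\norm{\speed-\vz})\,\d\speed$. The Dirac constrains $\speed$ to the affine line $\{\speed:\dotp{\speed}{\xi}=-\tau\}$ with weight $\norm{\xi}^{-1}$; parametrizing that line by the polar angle $\phi$ of $\speed-\vz$ relative to $\xi$ forces $\norm{\speed-\vz}=-(\tau+\dotp{\vz}{\xi})/(\norm{\xi}\cos\phi)$, so that $J(\xi,\tau)=\norm{\xi}^{-1}\,\Ll(\fr)\bigl(-(\tau+\dotp{\vz}{\xi})/\norm{\xi}\bigr)$ with $\Ll$ as in~\eqref{eq:linear-trans-spe} (note $\Ll(\fr)$ is automatically even, since $\phi\mapsto\phi+\pi$ flips the sign of $\cos\phi$, which absorbs the sign ambiguity). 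Multiplying the $\norm{\xi}^{-1}\fz(\norm{\xi})\ftheta(\angle{\xi})$ factor by $2\pi\,J(\xi,\tau)$ then yields precisely~\eqref{eq-dfn-mc-spectrum}.

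The two delicate points are the following. The limit $\sigma\to0$ is only clean up to a multiplicative constant: one must renormalize $I_\sigma$, justify interchanging the limit with the $(\z,\th)$-integral by a domination argument, and check that the remaining factors are continuous in $\xi$ off $\{0\}$ so that the weak convergence of $|\hat g_\sigma|^2\,\d\xi$ can be applied. The computational heart, however, is the velocity integral: the Dirac naturally carries the arclength measure on the admissible line, whereas the passage to the angular variable $\phi$ that produces the transform $\Ll$ brings in a Jacobian, and tracking that Jacobian against the chosen normalization of the profile $\fr$ (and, similarly, the polar Jacobian in the $(\z,\th)$ step that produces the $\norm{\xi}^{-2}$ prefactor) is where essentially all of the bookkeeping lies.
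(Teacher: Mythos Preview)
Your approach is essentially the paper's: Fourier-transform the covariance kernel of Proposition~\ref{eq-conv-deadleaves}, let $|\hat g_\sigma|^2$ collapse to Diracs at $\pm\xiz$ so that the $(\z,\th)$-integral localizes at $(\norm{\xi},\angle\xi)$, and reduce the velocity integral to an angular integral yielding $\Ll(\fr)$. The paper works directly in polar velocity coordinates $(r,\phi)$ and packages both constraints as a single point-mass $\de_{\Cc}$ in $(\th,\z,r)$, so that $\norm{\xi}^{-2}$ comes entirely from evaluating the explicit $\z^{-2}$ factor at $\z=\norm{\xi}$, whereas you split the $\norm{\xi}^{-2}$ between the two steps; your flagged concern about the arclength-to-$\phi$ Jacobian is precisely the spot where the paper's own argument is informal.
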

\begin{proof}
See Appendix~\ref{proof:prop-mc-spectrum}.
\end{proof}

\begin{rem}
Note that the envelope of $\hat\ga$ as defined in equation~\eqref{eq-dfn-mc-spectrum} is constrained to lie within a cone in the spatio-temporal domain with the apex at zero (see the division by $\norm{\xi}$ in the argument of $\Ll(\fr)$). This is an important and novel contribution, when compared to a classical Gabor. Basing the generation of the textures on distributions of translations, rotations and zooms, we provide a principled approach to show that speed bandwidth gets scaled with spatial frequency to provide a scale invariant model of moving texture transformations.
\end{rem}

\subsection{Biologically-inspired Parameter Distributions}
\label{sec-bio-inspired}

We now give meaningful specialization for the probability distributions $\fz$, $\ftheta$, and $\fr$, which are inspired by some known scaling properties of the visual transformations relevant to dynamic scene perception. 
\paragraph{Parameterization of $\fz$.}
First, the observer's small, centered linear movements along the axis of view (orthogonal to the plane of the scene) generate centered planar zooms of the image. From  the linear modeling of the observer's displacement and the subsequent multiplicative nature of zoom, scaling should follow a Weber-Fechner law. This law states that subjective perceptual sensitivity when quantified is proportional to the logarithm of stimulus intensity. Thus, we choose the scaling $\z$ drawn from a log-normal distribution $\fz$, defined in equation~\eqref{eq-biosinspired-fz}. The parameter $\tlnsz$ quantifies the variation in the amplitude of zooms of individual textons relative to the characteristic scale $\tlnz$. We thus define
\eql{\label{eq-biosinspired-fz}
	\fz(\z) \propto \dfrac{\tlnz}{\z} 
	\exp \pa{
		-\frac{\ln\left( \frac{\z}{\tlnz} \right)^2}{2\ln\left(1 + \tlnsz^2 \right)}
	}, 
}
where $\propto$ means that we did not include the normalizing constant. In practice, we may prefer to parametrize this distribution by its mode and octave bandwidth $(z_0,B_z)$ instead of $(\tlnz,\tlnsz)$. See Appendix~\ref{sec:appendix} where we discuss two different parametrizations.

\paragraph{Parameterization of $\ftheta$.}

In our model, the texture is perturbed by variations in the global angle $\theta$ of the scene: for instance, the head of the observer may roll slightly around its normal position. The von-Mises distribution -- as a good approximation of the warped Gaussian distribution around the unit circle -- is an adapted choice for the distribution of $\theta$  with mean $\theta_0$ and bandwidth $\stheta$, 
\eql{\label{eq-biosinspired-ftheta}
	\ftheta(\th) \propto e^{ \frac{\cos (2 ( \theta - \theta_0 ))}{4 \stheta^2} }
}

\paragraph{Parameterization of $\fr$.}

We may similarly consider that the position of the observer is variable in time. On the first order approximation, movements perpendicular to the axis of view dominate, generating random perturbations to the global translation $\vz$ of the image at speed $\speed-\vz \in \RR^2$. These perturbations are for instance described by a Gaussian random walk: take for instance tremors, which are small, constant and jittering movements of the eye ($\leq 1$ deg). This justifies the choice of a radial distribution~\eqref{eq-distr-V} for $\fv$. 
This radial distribution $\fr$ is thus selected as a bell-shaped function of width $\sr$, and we choose here a Gaussian function for its generality
\eql{\label{eq-biosinspired-fr}
	\fr(r) \propto e^{-\frac{r^2}{ 2{\sr}^2 }}.
}
Note that, as detailed in Section~\ref{sec-spde}, a slightly different bell-function (with a more complicated expression) should be used to obtain an exact equivalence with the sPDE discretization.

\paragraph{Bringing everything together.}

\begin{figure}[!ht]
\begin{center}
\includegraphics[scale=1]{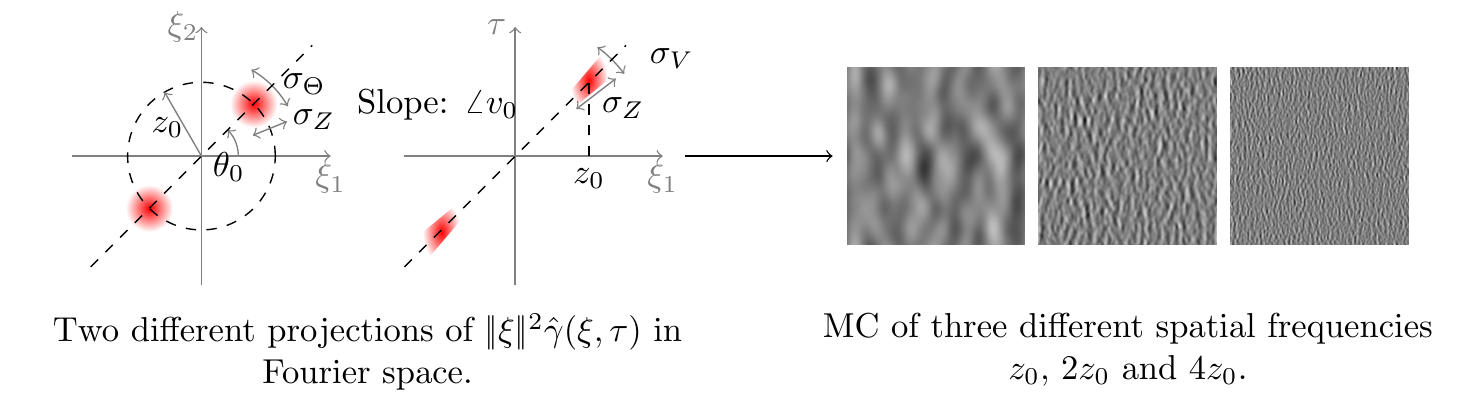}
\caption{Graphical representation of the covariance $\hat\ga$ shown as a projection on the spatial frequency plane (left) and the spatio-temporal frequency plane (middle) --- note the cone-like shape of the envelopes in both cases. The three luminance stimulus images on the right are an example of synthesized frames for three different spatial frequencies, respectively from left to right a low, a medium and a high frequency.}
\label{fig:rpt3D}
\end{center}
\end{figure}

\begin{figure}
\begin{center}
\includegraphics[scale=1]{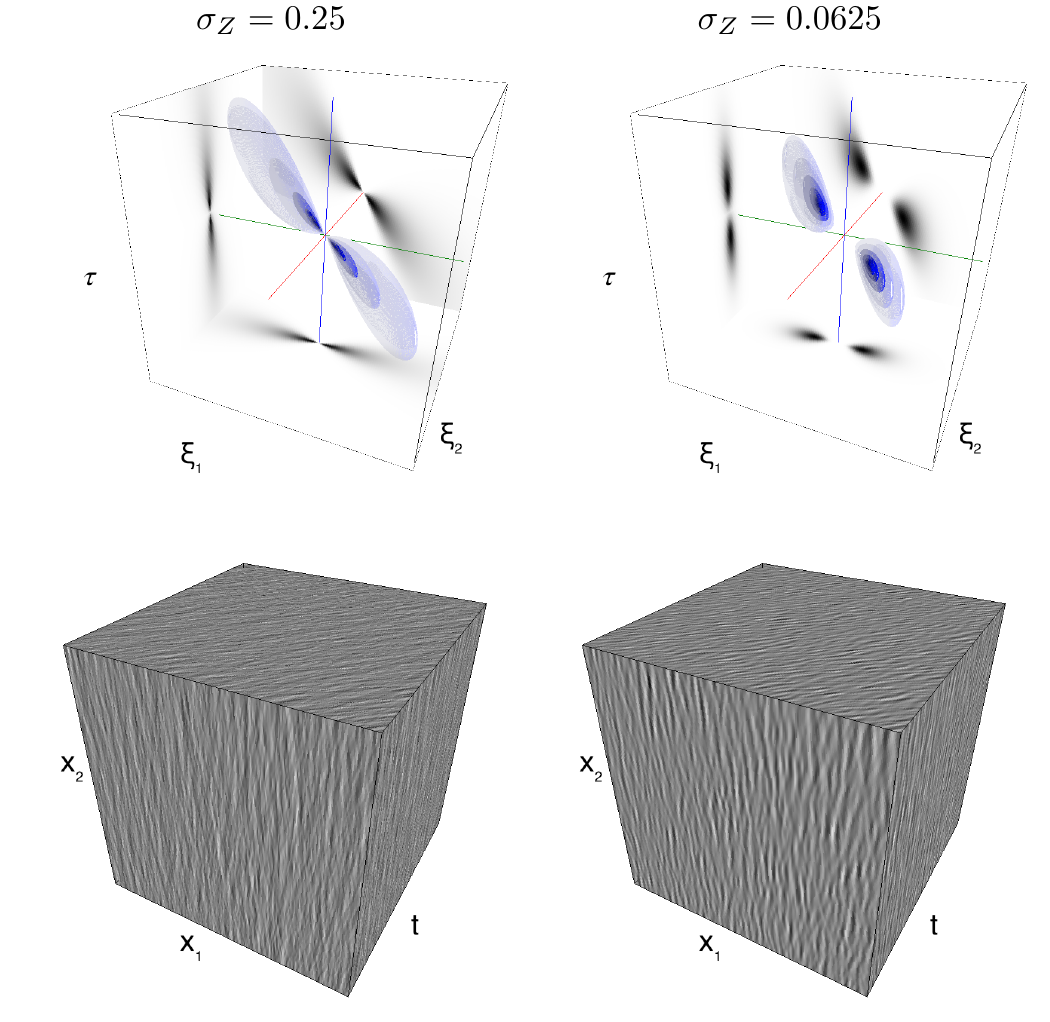}
\caption{Comparison of a broadband (left) vs. narrowband (right) stimulus. 
	Two instances (left and right columns) of two motions clouds having the same parameters, except the frequency bandwidths $\sz$, which were different.
    The top column displays iso-surfaces of $\hat\ga$, in the form of enclosing volumes at different energy values with respect to the peak amplitude of the Fourier spectrum.
    The bottom column shows an isometric view of the faces of a movie cube, which is a realization of the random field $I$.
    The first frame of the movie lies on the $(x_1,x_2,t=0)$ spatial plane.
	The Motion Cloud with the broadest bandwidth is often thought to best represent stereotyped natural stimuli since it similarly contains a broad range of frequency components.
}
\label{fig:wMC}
\end{center}
\end{figure}

Plugging these expressions~\eqref{eq-biosinspired-fz},~\eqref{eq-biosinspired-ftheta} and~\eqref{eq-biosinspired-fr} into the definition~\eqref{eq-dfn-mc-spectrum} of the power spectrum of the defintion of MCs, one obtains a parameterization which shares similarities with the one originally introduced in~\citep{Simoncini12}. 

Table~\ref{tab:model-param} which follows recaps the parameters of the biologically-inpired MC models. It is composed of the central parameters: $v_0$ for the speed, $\th_0$ for orientation  and $\z_0$ for the central spatial frequency modulus, as well as corresponding ``dispersion'' parameters $(\sr,\stheta,\lnbw)$ which account for the typical deviation around these central values. 
\begin{table}
\begin{center}
	\begin{tabular}{|c|c|c|c|}
		\hline
		Parameter name & Translation speed  & Orientation angle & Spatial freq. modulus \\\hline
		(mean, dispersion) &
			$(\vz, \sr)$ & 
			$(\th_0, \stheta)$ & 
			$(\z_0, \lnbw) $\\\hline
	\end{tabular}
\end{center}
\caption{The full set of six parameters which characterize the Motion Cloud stimulus model. See text for details.}
\label{tab:model-param}
\end{table}
Figure~\ref{fig:rpt3D} graphically shows the influence of these parameters on the shape of the MC power spectrum $\hat\ga$. 
 
We show in Figure~\ref{fig:wMC} two examples of such stimuli for two different spatial frequency bandwidths. This is particularly relevant as it is possible to dissociate the respective roles of broader or narrower spatial frequency bandwidths in action and perception~\citep{Simoncini12}. Using this formulation to extend the study of visual perception to other dimensions like orientation or speed bandwidths should provide a means to systematically titrate their respective role in motion integration and obtain a quantitative assessment of their respective contributions in experimental data.

\section{sPDE Formulation and Synthesis Algorithm}
\label{sec-spde-model}

In this section, we show that the MC model (Definition~\ref{defn-mc-fourier}) can equally be described as the stationary solution of a stochastic partial differential equation (sPDE). This sPDE formulation is important since we aim to deal with dynamic stimulation, which should be described by a causal equation which is local in time. 
This is crucial for numerical simulations, since this allows us to perform real-time synthesis of stimuli using an auto-regressive time discretization. This is a significant departure from previous Fourier-based implementation of dynamic stimulations~\citep{Leon12,Simoncini12}. 
Moreover, this is also important to simplify the application of MC inside a Bayesian model of psychophysical experiments (see Section~\ref{sec-psychophysics}). 
In particular, the derivation of an equivalent sPDE model exploits a spectral formulation of MCs as Gaussian Random fields. The full proof along with the synthesis algorithm follows.

To be mathematically correct, all the sPDE in this article are written in the sense of generalized stochastic processes (GSP), which are to stochastic processes what generalized functions are to functions. This allows for the consideration of linear transformations of stochastic processes, like differentiation or Fourier transforms as for generalized functions. We refer to~\textcite{unser2014unified} for a recent use of GSP and to~\textcite{gelfand1964generalized} for the foundation of the theory. The connection between GSP and stochastic processes has been described in previous work~\citep{meidan1980connection}.


\subsection{Dynamic Textures as Solutions of sPDE}

\paragraph{Using a sPDE without global translation, $v_0=0$.}

We first give the definition of a sPDE cloud $I$ making use of another cloud $I_0$ without translation speed. This allows us to restrict our attention to the case $v_0=0$ in order to define a simple sPDE, and then to explicitly extend that result to the general case. 

\begin{defn}
For a given spatial covariance $\Si_W$, 2-D spatial filters $(\al,\be)$ and a translation speed $v_0 \in \RR^2$, a sPDE cloud is defined as
\eql{\label{eq-time-warping}
	I(x,t) \eqdef I_0(x-\vz t, t).
}
where $I_0$ is a stationary Gaussian field satisfying for all $(x,t)$,
\eql{\label{eq-spde}
	\Dd(I_0) = \pd{W}{t}
	\qwhereq
	\Dd(I_0) \eqdef \pdd{I_0}{t} + \al \star \pd{I_0}{t} + \be \star I_0
}
where the driving noise $\pd{W}{t}$ is white in time (i.e. corresponds to the temporal derivative of a Brownian motion in time) and has a 2-D stationary covariance $\sigma_W$ in space and $\star$ is the spatial convolution operator. 
\end{defn}

The random field $I_0$ solving equation~\eqref{eq-spde} thus corresponds to a sPDE cloud with no translation speed, $v_0=0$.
The filters $(\al,\be)$ parameterizing this sPDE cloud aim at enforcing an additional correlation of the model in time. 
Section~\ref{sec-spde} explains how to choose $(\al,\be,\sigma_W)$ so that these sPDE clouds, which are stationary solutions of equation~\eqref{eq-spde}, have the power spectrum given in~\eqref{eq-dfn-mc-spectrum} (in the case that $\vz=0$), i.e. are Motion Clouds.
Defining a causal equation which is local in time is crucial for numerical simulation (as explained in Section~\ref{sec-ar2-numerics}) but also to simplify the application of MC inside a Bayesian model of psychophysical experiments (see Section~\ref{sec:lkl-choice}).

The sPDE equation~\eqref{eq-spde} corresponds to a set of independent stochastic ODEs over the spatial Fourier domain, which reads, for each frequency $\xi$, 
\eql{\label{eq-spde-freq}
	\foralls t \in \RR, \quad
	\pdd{\hat I_0(\xi,t)}{t} + 
	\hat\al(\xi) \pd{\hat I_0(\xi,t)}{t} + 
	\hat\be(\xi) \hat I_0(\xi,t) = 
	\hsiW(\xi) \hat w(\xi,t)
}
where $\hat I_0(\xi,t)$ denotes the Fourier transform with respect to the spatial variable $x$ only and $\hsiW(\xi)^2$ is the spatial power spectrum of $\pd{W}{t}$, which means that 
\eql{\label{eq-cov-rhs-spde}
	\Si_W(x,y)=c(x-y)
	\qwhereq
	\hat c(\xi) = \hsiW^2(\xi).
}
Finally, $\hat w(\xi,t) \sim \Cc\Nn(0,1)$ where $\Cc\Nn$ is the complex-normal distribution. 

While the equation~\eqref{eq-spde-freq} should hold for all time $t \in \RR$, the construction of stationary solutions (hence sPDE clouds) of this equation is obtained by solving the sODE~\eqref{eq-spde-freq} forward for time $t>t_0$  with arbitrary boundary conditions at time $t=t_0$, and letting $t_0 \rightarrow -\infty$. This is consistent with the numerical scheme detailed in Section~\ref{sec-ar2-numerics}. 

The theoretical study of equation~\eqref{eq-spde} is beyond the scope of this paper, however one can show the existence and uniqueness of stationary solutions for this class of sPDE under stability conditions on the filters $(\al,\be)$ (see for instance~\citep{UnserBook,brockwell2009existence} and appendix Theorem~\ref{thm-sol-eq}). These conditions are automatically satisfied for the particular case of Section~\ref{sec-spde}.  

\paragraph{sPDE with global translation.}

The easiest way to define and synthesize a sPDE cloud $I$ with non-zero translation speed $v_0$ is to first define $I_0$ solving equation~\eqref{eq-spde-freq} and then translating it with constant speed using equation~\eqref{eq-time-warping}. An alternative way is to derive the sPDE satisfied by $I$, as detailed in the following proposition. This is useful to define motion energy in Section~\ref{sec:lkl-choice}. 

\begin{prop}\label{prop-spde-translation}
	The MCs noted $I$ with $(\al,\be,\Si_W)$ the speed parameters and $\vz$ the translation speed are the stationary solutions of the sPDE
	\eql{\label{eq-spde-warped}
		\Dd(I) + \dotp{\Gg(I)}{\vz} + \dotp{\Hh(I) \vz}{\vz} = \pd{W}{t}
	}
	where $\Dd$ is defined in equation~\eqref{eq-spde}, $\nabla_x^2 I$ is the Hessian of $I$ (second order spatial derivative) and where 
	\eql{\label{eq-defn-G-H}
		\Gg(I) \eqdef \al \star \nabla_x I + 2  \partial_t \nabla_x I 
		\qandq
		\Hh(I) \eqdef \nabla_x^2 I.
	}
\end{prop}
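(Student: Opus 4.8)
The plan is to obtain the sPDE satisfied by $I$ directly from the defining relation $I(x,t) = I_0(x - v_0 t, t)$ via the chain rule, substituting into the known equation $\Dd(I_0) = \partial_t W$ and re-expressing every derivative of $I_0$ in terms of derivatives of $I$. First I would record the first-order identities: writing $y = x - v_0 t$, we have $\nabla_x I(x,t) = (\nabla_x I_0)(y,t)$ and, by the chain rule in the time variable, $\partial_t I(x,t) = (\partial_t I_0)(y,t) - \dotp{v_0}{(\nabla_x I_0)(y,t)} = (\partial_t I_0)(y,t) - \dotp{v_0}{\nabla_x I(x,t)}$. The spatial convolutions are harmless because convolution in $x$ commutes with the shift $x \mapsto x - v_0 t$, so $(\al \star I_0)(y,t) = (\al \star I)(x,t)$ and likewise for $\be$ and for $\al \star \partial_t I_0$ after the same substitution.

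Next I would differentiate once more to get the second-order term. From $\partial_t I_0(y,t) = \partial_t I(x,t) + \dotp{v_0}{\nabla_x I(x,t)}$, applying $\partial_t$ again along trajectories gives $(\partial_t^2 I_0)(y,t) = \partial_t^2 I + 2 \dotp{v_0}{\partial_t \nabla_x I} + \dotp{\nabla_x^2 I\, v_0}{v_0}$, where the middle term comes from differentiating both occurrences of $v_0$-contracted gradients and the last from the Hessian $\nabla_x^2 I = \Hh(I)$. Likewise $(\al \star \partial_t I_0)(y,t) = \al \star \partial_t I + \dotp{\al \star \nabla_x I}{v_0}$. Collecting, $\Dd(I_0)(y,t) = \partial_t^2 I_0 + \al \star \partial_t I_0 + \be \star I_0$ becomes $\Dd(I) + \dotp{\al \star \nabla_x I + 2\,\partial_t \nabla_x I}{v_0} + \dotp{\Hh(I) v_0}{v_0}$, which is exactly $\Dd(I) + \dotp{\Gg(I)}{v_0} + \dotp{\Hh(I) v_0}{v_0}$ with $\Gg$, $\Hh$ as in~\eqref{eq-defn-G-H}. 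The right-hand side $\partial_t W$ is unaffected since $W$ has a spatially stationary covariance and $\partial_t W$ is white in time, so its law is invariant under the spatial shift; one should note this is where the statement is understood in the sense of generalized stochastic processes, so ``$\partial_t W$'' on the right is the same driving noise and the identity is an equality of GSPs.

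The main subtlety — and the step I would be most careful about — is justifying the chain-rule manipulations at the level of generalized stochastic processes rather than classical functions: $I_0$ is a.s. not differentiable in the classical sense, so ``$\partial_t^2 I_0$'' etc. must be interpreted distributionally, and one must check that composing with the smooth, invertible change of variables $(x,t) \mapsto (x - v_0 t, t)$ commutes with distributional differentiation in the expected way. This is standard (pullback of distributions by a diffeomorphism, combined with the fact that the shift is an isometry in $x$ with unit Jacobian), so I would state it as a lemma or simply invoke the GSP framework of~\textcite{unser2014unified,gelfand1964generalized} already cited, and carry out the formal computation above, which then becomes rigorous verbatim once tested against smooth test functions. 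Finally, existence and uniqueness of the stationary solution $I$ transfers immediately from that of $I_0$ (guaranteed under the stability conditions referenced before Proposition~\ref{prop-spde-translation}) through the bijection~\eqref{eq-time-warping}, so no separate well-posedness argument is needed for~\eqref{eq-spde-warped}.
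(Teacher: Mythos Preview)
Your proof is correct and follows essentially the same route as the paper: compute the time derivatives of $I_0$ in terms of those of $I$ via the chain rule applied to the warping $I(x,t)=I_0(x-\vz t,t)$, substitute into $\Dd(I_0)=\partial_t W$, and invoke the spatial stationarity of the driving noise to absorb the shift on the right-hand side. Your additional remarks on the GSP interpretation and the transfer of well-posedness through the bijection~\eqref{eq-time-warping} are sound elaborations, but the computational core matches the paper's argument exactly.
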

\begin{proof}
See Appendix~\ref{proof:prop-spde-translation}.
\end{proof}

\subsection{Equivalence between the spectral and sPDE formulations}
\label{sec-spde}

Since both MCs and sPDE clouds are obtained by a uniform translation with speed $v_0$ of a motion-less cloud, we can restrict our analysis to the case $v_0=0$ without loss of generality.

In order to relate MCs to sPDE clouds, equation~\eqref{eq-spde-freq} makes explicit that the functions $(\hat\al(\xi),\hat\be(\xi))$ should be chosen in order for the temporal covariance of the resulting process to be equal to (or at least to approximate well) the temporal covariance appearing in equation~\eqref{eq-dfn-mc-spectrum}.
This covariance should be localized around $0$ and be non-oscillating. It thus makes sense to constrain $(\hat\al(\xi),\hat\be(\xi))$ so that the corresponding ODE~\eqref{eq-spde-freq} be critically damped, which corresponds to imposing the following relationship
\eq{
	\foralls \xi, \quad
	\hat \al(\xi)=\frac{2}{\hat \nu(\xi)} \qandq \hat \be(\xi)=\frac{1}{\hat \nu^2(\xi)}
}
for some relaxation step size $\hat\nu(\xi)$.  The model is thus solely parameterized by the noise variance $\hsiW(\xi)$ and the characteristic time $\hat \nu(\xi)$.

The following proposition shows that the sPDE cloud model~\eqref{eq-spde} and the Motion Cloud model~\eqref{eq-dfn-mc-spectrum} are identical for an appropriate choice of function $\fr$. 

\begin{prop}\label{prop-spde-equiv}
	When considering 
	\eql{\label{eq-expression-fr} 
		\foralls r>0, \quad \fr(r) = \Ll^{-1}(h)(r/\sr) 
		\qwhereq
		h(u) = (1+u^2)^{-2} 
} 
where $\Ll$ is defined in equation~\eqref{eq-dfn-mc-spectrum}, 
	equation~\eqref{eq-spde} admits a solution $I$ which is a stationary Gaussian field with power spectrum defined in equation~\eqref{eq-dfn-mc-spectrum}, when setting
	\eql{\label{eq-equiv-spde-mc}
		\hsiW^2(\xi) =  \frac{4}{\hat\nu(\xi)^3 \norm{\xi}^2} \fz(\norm{\xi})
			\ftheta(\angle{\xi}), 
		\qandq
		\hat \nu(\xi) = 
		 \frac{1}{ \sr \norm{\xi} }.
	} 
\end{prop}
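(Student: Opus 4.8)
\emph{The plan} is to pass to the spatial Fourier domain, where equation~\eqref{eq-spde} decouples into independent scalar second-order stochastic ODEs, solve each of them, read off the temporal power spectrum of the stationary solution, and match it against the Motion Cloud spectrum of Proposition~\ref{prop-mc-spectrum}. As already observed at the start of Section~\ref{sec-spde}, both the sPDE cloud and the MC are obtained from their $\vz=0$ versions by a uniform translation at speed $\vz$, so it suffices to treat the case $\vz=0$.

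\emph{Step 1 (Fourier diagonalization and well-posedness).} Taking the spatial Fourier transform turns~\eqref{eq-spde} into the family~\eqref{eq-spde-freq} of scalar sODEs indexed by $\xi$. Under the critical-damping ansatz $\hat\al(\xi)=2/\hat\nu(\xi)$, $\hat\be(\xi)=1/\hat\nu(\xi)^2$, the characteristic polynomial of the $\xi$-th equation is $(\lambda+1/\hat\nu(\xi))^2$, whose double root $-1/\hat\nu(\xi)$ is strictly negative; hence the stability assumptions behind Theorem~\ref{thm-sol-eq} hold, and~\eqref{eq-spde} has a unique stationary Gaussian solution $I_0$, realized as the $t_0\to-\infty$ limit of the forward-in-time construction described after~\eqref{eq-cov-rhs-spde}.

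\emph{Step 2 (temporal spectrum of the scalar sODE).} Fix $\xi$ and take a temporal Fourier transform in~\eqref{eq-spde-freq}; since the per-frequency driving noise is white in time, the stationary solution $\hat I_0(\xi,\cdot)$ has temporal power spectral density
\eq{
  \frac{\hsiW(\xi)^2}{\big|(\imath\tau)^2+\hat\al(\xi)\,\imath\tau+\hat\be(\xi)\big|^2}
  =\frac{\hsiW(\xi)^2}{\big(\tau^2+\hat\nu(\xi)^{-2}\big)^2}
  =\hsiW(\xi)^2\,\hat\nu(\xi)^4\,h\big(\hat\nu(\xi)\,\tau\big),
}
using $h(u)=(1+u^2)^{-2}$ and the critical-damping cancellation in the denominator. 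Thus the full space-time power spectrum of $I_0$ equals $\hsiW(\xi)^2\,\hat\nu(\xi)^4\,h(\hat\nu(\xi)\tau)$ (up to the normalization conventions fixed in Section~\ref{sec:notations}).

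\emph{Step 3 (matching with the MC spectrum).} By Proposition~\ref{prop-mc-spectrum} with $\vz=0$ the target spectrum is $\frac{\fz(\norm{\xi})}{\norm{\xi}^2}\ftheta(\angle{\xi})\,\Ll(\fr)\big(-\tau/\norm{\xi}\big)$. The crucial algebraic point is that for $\fr(r)=\Ll^{-1}(h)(r/\sr)$, linearity of $\Ll$ together with the change of variable inside the integral~\eqref{eq:linear-trans-spe} gives $\Ll(\fr)(u)=\Ll\big(\Ll^{-1}(h)\big)(u/\sr)=h(u/\sr)$, and $h$ is even, so $\Ll(\fr)(-\tau/\norm{\xi})=h\big(\tau/(\sr\norm{\xi})\big)$. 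Choosing $\hat\nu(\xi)=1/(\sr\norm{\xi})$ makes the $\tau$-profiles $h(\hat\nu(\xi)\tau)$ and $h(\tau/(\sr\norm{\xi}))$ coincide identically, and the remaining, $\tau$-independent prefactor then determines $\hsiW(\xi)^2$; propagating the normalizing constants of $\fz,\ftheta,\fr$ and of $\Ll^{-1}$ through Proposition~\ref{prop-mc-spectrum} yields precisely the expression in~\eqref{eq-equiv-spde-mc}.

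\emph{Main obstacle.} Steps 1–2 are routine; the genuine difficulties are (i) making the $t_0\to-\infty$ stationary construction rigorous within the generalized-stochastic-process framework of~\textcite{gelfand1964generalized,unser2014unified}, so that $I_0$ really is a well-defined GSP with the claimed spectral density — this is where Theorem~\ref{thm-sol-eq} is invoked; and (ii) giving meaning to $\Ll^{-1}(h)$: one must verify that $h=(1+u^2)^{-2}$ lies in the range of the Abel-type averaging transform $\Ll$ and that $\Ll^{-1}(h)$ is a nonnegative, integrable, bell-shaped function, so that the Motion Cloud built from the radial speed law $\fr$ in the statement is actually well defined. The final piece is careful bookkeeping of the various normalization constants so as to land on the exact coefficients of~\eqref{eq-equiv-spde-mc}.
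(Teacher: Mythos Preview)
Your proposal is correct and follows essentially the same route as the paper: reduce to $\vz=0$, pass to the spatial Fourier domain where the sPDE decouples, compute the temporal power spectrum of the critically damped scalar sODE at each $\xi$, and match it against the MC spectrum via $\Ll(\fr)=h$. The only difference is cosmetic: the paper obtains that temporal spectrum by first writing the impulse response $r(t)=t\,e^{-t/\hat\nu(\xi)}\one_{\RR^+}(t)$ and computing the autocorrelation $r\star\bar r$ in the time domain before Fourier-transforming, whereas you read it off directly from the transfer function $|-\tau^2+i\hat\al(\xi)\tau+\hat\be(\xi)|^{-2}$.
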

\begin{proof}
See Appendix~\ref{proof:prop-spde-equiv}.
\end{proof}

\paragraph{Expression for $\fr$}
\label{sec-expression-fr}

Equation~\eqref{eq-expression-fr} states that in order to obtain a perfect equivalence between the MC defined by equation~\eqref{eq-dfn-mc-spectrum} and by equation~\eqref{eq-spde}, the function $\Ll^{-1}(h)$ has to be well-defined. Therefore, we need to compute the inverse transform of the linear operator $\Ll$
\eq{\label{Ltransform}
	\forall u \in \RR,  \quad \Ll(f)(u) = 2\int_{0}^{\pi/2} f(-u/ \cos(\phi) ) \d \phi.  	
}
This is done for the function $h$ in Appendix~\ref{ap:comp-res}, Proposition~\ref{prop:inverse}.

\subsection{AR(2) Discretization of the sPDE}
\label{sec-ar2-numerics}

Most previous works on Gaussian texture synthesis (such as~\textcite{Galerne11} for static and~\textcite{Leon12,Simoncini12} for dynamic textures) have used a global Fourier-based approach and the explicit power spectrum expression~\eqref{eq-dfn-mc-spectrum}. The main drawbacks of such an approach are: (i)~it introduces an artificial periodicity in time and thus can only be used to synthesize a finite number of frames; (ii)~these frames must be synthesized at once, before the stimulation, which prevents real-time synthesis; (iii)~the discrete computational grid may introduce artifacts, in particular when one of the included frequencies is of the order of the discretization step or when a bandwidth is too small.

To address these issues, we follow the previous works of~\citep{Doretto-Dyntex,2014-xia-siims}
and make use of an auto-regressive (AR) discretization of the sPDE~\eqref{eq-spde}. In contrast with these previous works, we use a second order AR(2) regression (instead of a first order AR(1) model). Using higher order recursions is crucial to make the output consistent with the continuous formulation equation~\eqref{eq-spde}. Indeed, numerical simulations show that AR(1) iterations lead to unacceptable temporal artifacts: In particular, the time correlation of AR(1) random fields typically decays too fast in time. 

\paragraph{AR(2) synthesis without global translation, $v_0=0$.}

The discretization computes a (possibly infinite) discrete set of 2-D frames $(I_0^{(\ell)})_{\ell \geq \ell_0}$ separated by a time step $\De$, and we approach the derivatives at time $t=\ell\De$ as 
\eq{
	\pd{I_0(\cdot,t)}{t} \approx \De^{-1}( I_0^{(\ell)} - I_0^{(\ell-1)} )
	\qandq
	\pdd{I_0(\cdot,t)}{t} \approx \De^{-2}( I_0^{(\ell+1)} + I_0^{(\ell-1)} - 2 I_0^{(\ell)} ), 
}
which leads to the following explicit recursion 
\eql{\label{eq-ar-discr}
	\foralls \ell \geq \ell_0, \quad
	I_0^{(\ell+1)} = 
	( 2\de - \De \al - \De^2 \be) \star I_0^{(\ell)}  + 
	( -\de + \De \al ) \star I_0^{(\ell-1)} + 
	\De^2 W^{(\ell)}, 
}
where $\de$ is the 2-D Dirac distribution and where $(W^{(\ell)})_{\ell}$ are i.i.d. 2-D Gaussian field with distribution $\Nn(0,\Si_W)$, and $(I_0^{(\ell_0-1)}, I_0^{(\ell_0-1)})$ can be arbitrary initialized. 

One can show that when $\ell_0 \rightarrow -\infty$ (to allow for a long enough ``warmup'' phase to reach approximate time-stationarity) and $\De \rightarrow 0$, then $I_0^\De$ defined by interpolating $I_0^\De(\cdot,\De \ell) = I^{(\ell)}$ converges (in the sense of finite dimensional distributions) toward a solution $I_0$ of the sPDE~\eqref{eq-spde}. Here we choose to use the standard finite difference. However, we refer to~\textcite{Unser-Discrete,brockwell2007continuous} for more advanced discretization schemes.
We implement the recursion equation~\eqref{eq-ar-discr} by computing the 2-D convolutions with FFT's on a GPU, which allows us to generate high resolution videos in real time, without the need to explicitly store the synthesized video. 

\paragraph{AR(2) synthesis with global translation.}

The easiest way to approximate a sPDE cloud using an AR(2) recursion is to simply apply formula~\eqref{eq-time-warping} to $(I_0^{(\ell)})_\ell$ as defined in equation~\eqref{eq-ar-discr}, that is,  to define
\eq{
	I^{(\ell)}(x) \eqdef I_0^{(\ell)}(x-\vz \Delta \ell).
}
An alternative approach would consist in directly discretizing the sPDE~\eqref{eq-spde-warped}. We did not use this approach because it requires the discretization of spatial differential operators $\Gg$ and $\Hh$, and is hence less stable. 
A third, somehow hybrid, approach, is to apply the spatial translations to the AR(2) recursion, and define the following recursion 
\eql{\label{eq-ar-discr-translated}
	I^{(\ell+1)} = 
	\Uu_{v_0} \star I^{(\ell)}  + 
	\mathcal{V}_{v_0} \star I^{(\ell-1)} + 
	\De^2 W^{(\ell)},
}
\eql{\label{eq-defn-U-V}
	\qwhereq
	\choice{
		\Uu_{v_0} \eqdef (2\de - \De \al - \De^2 \be ) \star \de_{-\De v_0}, \\
		\mathcal{V}_{v_0} \eqdef (-\de + \De \al ) \star \de_{-2\De v_0}, 	
	}
}
where $\de_{s}$ indicates the Dirac at location $s$, so that $(\de_s \star I)(x) = I(x-s)$ implements the translation by $s$. Numerically, it is possible to implement equation~\eqref{eq-ar-discr-translated} over the Fourier domain, 
\eq{
	\hat I^{(\ell+1)}(\xi) =
	\hat\Uu_{v_0}(\xi) \hat  I^{(\ell)}(\xi)  +  
	\hat {\mathcal{V}}_{v_0}(\xi) \hat  I^{(\ell-1)}(\xi) + 
	\De^2 \hat \sigma_W(\xi) \hat w^{(\ell)}(\xi),
}
\eq{
	\qwhereq
	 \choice{
		\hat\Uu_{v_0}(\xi) \eqdef (2 - \De \hat \al(\xi) - \De^2 \hat\be(\xi)) e^{-\imath \De v_0 \xi}, \\
		\hat\Vv_{v_0}(\xi) \eqdef (-1 + \De \hat \al(\xi) ) e^{-2 \imath \De v_0 \xi}, 	
	}
}
and where $w^{(\ell)}$ is a 2-D white noise. 

\section{An Empirical Study of Visual Speed Discrimination}
\label{sec-psychophysics}

To exploit the useful parametric transformation features of our Motion Clouds (MC) model and provide a generalizable proof of concept based on motion perception, we consider here the problem of judging the relative speed of moving dynamical textures. The overall aim is to characterize the impact of both average spatial frequency and average duration of temporal correlations on perceptual speed estimation, based on empirical evidence.

\subsection{Methods}
\label{sec-methods}
The task was to discriminate the speed $\q \in \RR$ of a MC stimulus moving with a horizontal central speed $\mathbf{\q} = (\q,0)$. We assign as the independent experimental variable the most represented spatial frequency $\lnz$, denoted $\nuis$ in the rest of the paper for easier reading. The other parameters are set to the following values 
\eq{
	\sr = \frac{1}{t^\star \z}, \quad
	\th_0 = \frac \pi 2, \quad
	\stheta = \frac{\pi}{12}.
 }
Note that $\sr$ is thus dependent on the value of $\z$ to ensure that $t^\star = \frac{1}{\sr\z}$ stays constant. This parameter $t^\star$ controls the temporal frequency bandwidth, as illustrated in the middle of Figure~\ref{fig:rpt3D}. 
We used a two alternative forced choice (2AFC) paradigm.
In each trial, a gray fixation screen with a small dark fixation spot was followed by two stimulus intervals of $250~\ms$ each, separated by an uniformly gray $250~\ms$ inter-stimulus interval. The first stimulus had parameters $(\q_1,\nuis_1)$ and the second had parameters $(\q_2,\nuis_2)$. 
At the end of the trial, a gray screen appeared asking the participant to report which one of the two intervals was perceived as moving faster by pressing one of two buttons, that is whether $\q_1>\q_2$ or $\q_2>\q_1$.

Given reference values $(\q^\star,\nuis^\star)$, for each trial, $(\q_1,\nuis_1)$ and $(\q_2,\nuis_2)$ are selected such that 
\eq{
	\choice{
		\q_i = \q^\star, \;
		\nuis_i \in \nuis^\star + \Delta_\Nuis \\
		\q_j \in \q^\star + \Delta_\Q, \;
		\nuis_j = \nuis^\star
	}
	\qwhereq
		\Delta_\Q = \{ -2, -1, 0, 1, 2 \},
  }
where $(i,j)=(1,2)$ or $(i,j)=(2,1)$ (i.e. the ordering is randomized across trials), and where $\nuis$ values are expressed in cycles per degree (\si{c/\degree}) and $\q$ values in \si{\degree/\second}. The range $\Delta_\Nuis$ is defined in Table~\ref{tab:exp}.
Ten repetitions of each of the 25 possible combinations of these parameters are made per block of 250 trials and at least four of such blocks were collected per condition tested. 
The outcome of these experiments are summarized by psychometric curve samples $\hat\phi_{\q^\star,\nuis^\star}$, where for all $(\q-\q^\star,\nuis-\nuis^\star) \in \Delta_\Q \times \Delta_\Nuis$, the value $\hat\phi_{\q^\star,\nuis^\star}(\q,\nuis)$ is modeled as a Bernoulli random variable with parameter $\phi_{\q^\star,\nuis^\star}(\q,\nuis)$ that a stimulus generated with parameters $(\q^\star, \nuis)$ is moving faster than a stimulus with parameters $(\q,\nuis^\star)$.

We tested different scenarios summarized in Table~\ref{tab:exp}. Each row corresponds to approximately 35 minutes of testing per participant and was always performed by at least three of the participants.
\begin{table} 
\begin{tabular}{|c|c|c|c|c|c|c|}
  \hline
   Case & $t^\star$ & $\lnsz$ & $\lnbw$ & $\q^\star$ & $\nuis^\star$ & $\Delta_\Nuis $ \\
  \thickhline
   A1 & $200~\ms$ & $1.0~\si{c/\degree}$ & $\times$ & $5~\si{\degree/\second}$ & $0.78~\si{c/\degree}$ & $\{ -0.31, -0.16, 0, 0.16, 0.47 \}$ \\
  \hline
   A2 & $200~\ms$ & $1.0~\si{c/\degree}$ & $\times$ & $5~\si{\degree/\second}$ & $1.25~\si{c/\degree}$ & $\{ -0.47, -0.31, 0, 0.31, 0.63 \}$ \\
  \thickhline
    A3 & $200~\ms$ & $\times$ & $1.28$ & $5~\si{\degree/\second}$ & $1.25~\si{c/\degree}$ & $\{ -0.47, -0.31, 0, 0.31, 0.63 \}$ \\
  \hline
    A4 & $100~\ms$ & $\times$ & $1.28$ & $5~\si{\degree/\second}$ & $1.25~\si{c/\degree}$ & $\{ -0.47, -0.31, 0, 0.31, 0.63 \}$ \\
  \hline
	A5 & $200~\ms$ & $\times$ & $1.28$ & $10~\si{\degree/\second}$ & $1.25~\si{c/\degree}$ & $\{ -0.47, -0.31, 0, 0.31, 0.63 \}$ \\
  \thickhline
\end{tabular}
\caption{Stimulus parameters for the range of tested experimental conditions. A1-A2 in the first two rows are both bandwidth controlled in \si{c/\degree} and A3-A5 are bandwidth controlled in octaves with high (A3 and A5) and low (A4) $t^\star$.}
\label{tab:exp}
\end{table}
Stimuli were generated using Matlab 7.10.0 on a Mac running OS 10.6.8 and displayed on a 20" Viewsonic p227f monitor with resolution $1024\times 768$ at 100~\si{\Hz}. Psychophysics routines were written using Matlab and Psychtoolbox 3.0.9 controlled the stimulus display. Observers sat 57~\si{\cm} from the screen in a dark room. Five male observers with normal or corrected-to-normal vision took part in these experiments. They gave their informed consent and the experiments received ethical approval from the Aix-Marseille Ethics Committee in accordance with the declaration of Helsinki.

\subsection{Psychometric Results}
\label{sec:psych-res}

Estimating speed in dynamic visual scenes is undoubtedly a crucial skill for successful interaction with the visual environment. Human judgements of perceived speed have therefore generated much interest, and been studied with a range of psychophysics paradigms. The different results obtained in these studies suggest that rather than computing a veridical estimate, the visual system generates speed judgements influenced by contrast~\citep{thompson1982perceived}, speed range~\citep{thompson2006speed}, luminance~\citep{hassan2015perceptual}, spatial frequency~\citep{brooks2011contrast,Simoncini12,Smith10} and retinal eccentricity~\citep{hassan2016perceived}. There are currently no theoretical models of the underlying mechanisms serving speed estimation which capture this dependence on such a broad range of image characteristics. One of the reasons for this might be that the simplified grating stimuli used in most of the previous studies does not allow experimenters to shed light on the possible elaborations in neural processing that arise when more complex natural or naturalistic stimulation is used. Such elaborations, like nonlinearities in spatio-temporal frequency space, can be seen in their simplest form even with a superposition of a pair of gratings~\citep{Priebe03}. In the current work, we use our formulation of motion cloud stimuli, which allows for separate parametric manipulation of peak spatial frequency ($z$), spatial frequency bandwidth ($B_z, \sigma_z$) and stimulus lifetime ($t^\star$), which is inversely related to the temporal variability. The stimuli are all broadband, closely resembling the frequency properties under natural stimulation. Our approach is to test five participants under several parametric conditions given in Table~\ref{tab:exp} and using a large number of trials.

\paragraph{Psychometric function estimation} The psychometric function is estimated by the the following sigmoidal template function
\eql{\label{eq:psych-func}
	\phi_{\q^\star,\nuis^\star}^{{\mu},\Sigma}( \q,\nuis ) 
	= 
	\psi\pa{ 
		\frac{ 
			\q-\q^\star - {\mu}_{\nuis,\nuis^\star} 
		}{ 
			 \Sigma_{\nuis,\nuis^\star}   } 
		}	 	
	}
where $\psi(t)=\frac{1}{\sqrt{2\pi}}\int_{-\infty}^t e^{-s^2/2} \d s$ is the cumulative normal function and $({\mu}_{\nuis,\nuis^\star},\Sigma_{\nuis,\nuis^\star})$ denotes respectively bias and inverse sensitivity. The collected data are used to fit the two parameters using maximum likelihood estimation (see~\citep{wichmann2001psychometric})
\eq{
(\hat \mu,\hat \Sigma) = \uargmin{\mu,\Sigma} \sum_\q \KL{\hat \phi_{\q^\star,\nuis^\star}}{\phi_{\q^\star,\nuis^\star}^{\mu,\Sigma}}
}
where $\KL{\hat p}{p}$ is the Kullback-Leibler divergence between samples $\hat p$ and model $p$ under a Bernouilli distribution
\eq{
\KL{\hat p}{p} = \hat p \log\pa{\frac{\hat p}{p}} + (1-\hat p) \log\pa{\frac{1-\hat p}{1-p}}.
}
Results of these estimations are shown in Figure~\ref{fig:psych} for both non-parametric and linear $\Sigma_{\nuis,\nuis^\star}$ fits. 
\begin{rem} In practice we perform the fit in the log-speed domain, \textit{ie} we consider $\phi_{\tilde \q^\star,\nuis^\star}( \tilde\q,\nuis )$ where $\tilde \q = \ln(1 + \q / \q_0) $ with $\q_0 = 0.3 \si{\degree/\second} $ following \citep{Stocker06}. As the estimated bias $\tilde{\mu}$ is obtained in the log-speed domain, we convert it back to the speed domain by computing ${\mu}$, which solves the following equation 
\eq{
 \log(1+(v^\star+{\mu})/v_0) = \log(1+v^\star/v_0) + \tilde{\mu}
} 
Then, the speed bias is $ {\mu} = (v_0+v^\star)(\exp(\tilde{\mu})-1) $.
\end{rem}
\begin{figure}[!ht]
\begin{center}
\makebox[\textwidth]{\includegraphics[scale=0.36]{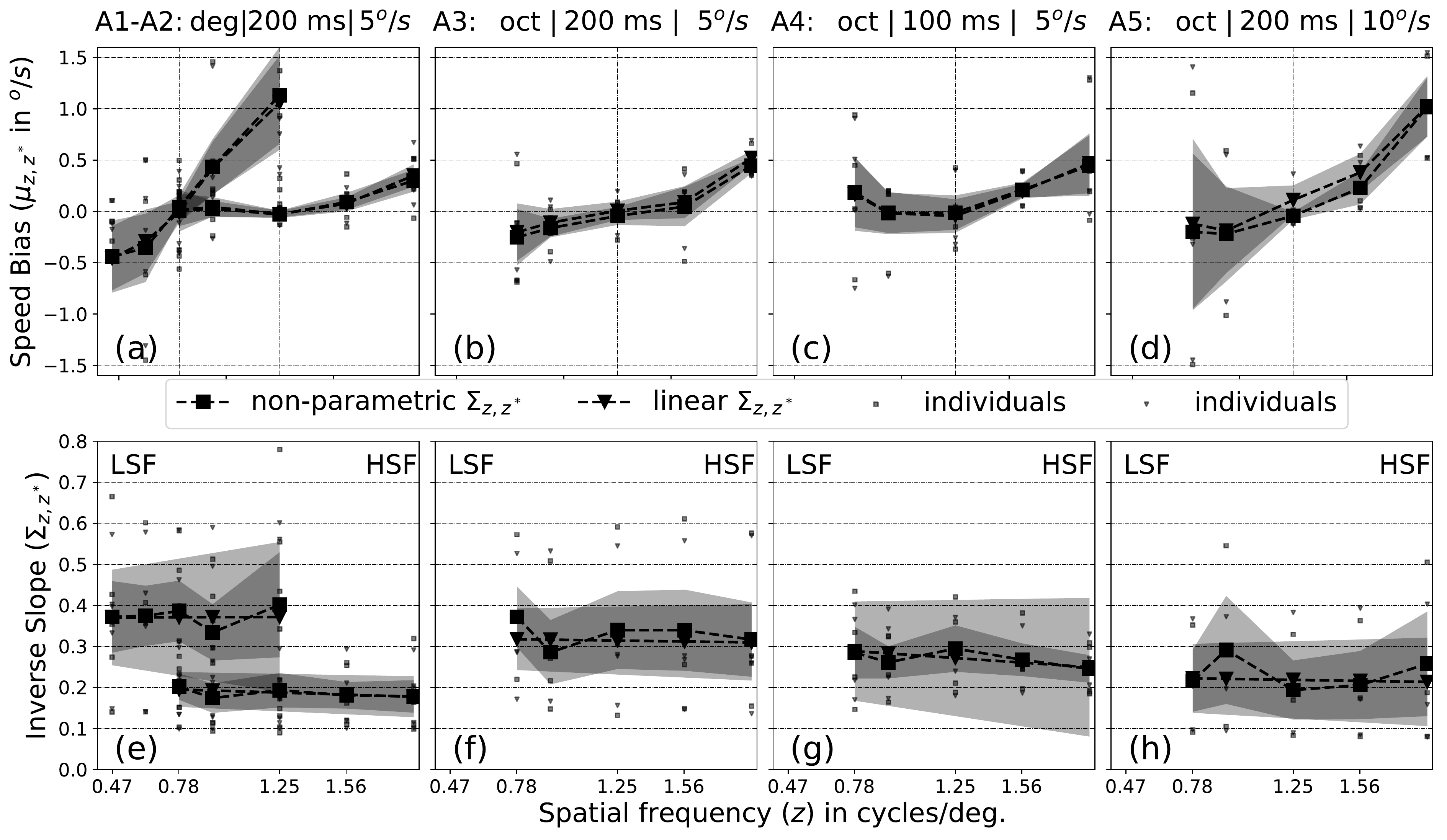}}
\caption{Averaged results over participants for the perceptual biases (top row) and inverse slope (bottom row) plotted against the tested central spatial frequency $z$. The specific parameters for each column are indicated above: bandwidth in degree (deg) or octaves (oct.), value of stimulus lifetime $t^\star$ and reference speed $v^\star$.  Small markers represent individual results and large markers represent population average. \textbf{(a-d)} \textit{Speed biases which generally show an increase at higher frequencies, but with individual differences.} \textbf{(e-h)} \textit{Inverse psychometric slopes that generally appear to be constant or decreasing across frequency.} From left to right: conditions A1-A2, A3, A4 and A5 (see Table 2 for details).}
\label{fig:psych-simp}
\end{center}
\end{figure}
%
%
\paragraph{Cycle-controlled bandwidth conditions}
The main manipulation in each case is the direct comparison of the speed of a range of five stimuli in which the central spatial frequency varies between five values, but all other parameters are equated under the different conditions. In a first manipulation in which bandwidth is controlled by fixing it at a value of 1~\si{c/\degree} for all stimuli (conditions A1-A2 in Table~\ref{tab:exp}), we find that lower frequencies are consistently perceived to be moving slower than higher frequencies (see Figure~\ref{fig:psych-simp}a). This trend is the same for both the lower and the higher spatial frequency ranges used in the tasks, yet the biases are smaller for the higher frequency range (see A1-A2 in Table~\ref{tab:exp} for details). This suggests that the effect generalizes across the two scales used, but that shifting the central spatial frequency value of the stimulus which forms the reference scale results in a change in sensitivity during speed discrimination. For example, comparing A1 and A2 performance in Figure~\ref{fig:psych-simp}, when the five stimuli of different speed which make up the reference scale are changed from $z^\star=0.78$ (A1) to $z^\star=1.25$ (A2), speed estimates seem to become less reliable. The same comparison is using a different psychometric measurement scale in each case.  The sensitivity to the discrimination of stimuli measured in the inverse of the psychometric slope is found to remain approximately constant across the range of frequency tested for each of the tested spatial frequencies, see Figure~\ref{fig:psych-simp}e. However, the sensitivity increases significantly ($\Sigma_{z^\star,z}$ decreases) from condition A1 to condition A2. Such an effect suggests that an increasing trend in sensitivity may exist (See paragraph~\ref{para:sensitive}). 

%
\paragraph{Octave-controlled bandwidth conditions}
The octave-bandwidth controlled stimuli of conditions A3 to A5 (see Table~\ref{tab:exp}), allow us to vary the spatial frequency manipulations ($z$) in a way that generates scale invariant bandwidths exactly as would be expected from zooming movements towards or away from scene objects (see Figure~\ref{fig:warps}). Thus if trends seen in Figure~\ref{fig:psych-simp}e were purely the result of ecologically invalid fixing of bandwidths at 1~\si{c/\degree} in the manipulations, this would be corrected in the current manipulation. Only the higher frequency comparison range from conditions A2 is used because trends are seen to be consistent across conditions A1 and A2. We find that the trends are generally the same as that seen in Figure~\ref{fig:psych-simp}a. Indeed higher spatial frequencies are consistently perceived as faster than lower ones, as shown in Figure~\ref{fig:psych-simp}b-d. Interestingly, for the degree bandwidth controlled stimuli, the biases are lower than those for the equivalent octave controlled stimuli (e.g. compare Figure~\ref{fig:psych-simp}a with~\ref{fig:psych-simp}b). This can also be seen in Figure~\ref{fig:biases-ampli} (conditions A2 and A3). A change in the bias is also seen with the manipulation of $t^\star$, as increasing temporal frequency variability when going from biases in Figure~\ref{fig:psych-simp}b to those in Figure~\ref{fig:psych-simp}c entails a reduction in measured biases, with an effect of about 25 \% which is also visible in Figure~\ref{fig:biases-ampli} (conditions A3 and A4 for M1).   

\paragraph{Is sensitivity dependent on stimulus spatial frequency ? \label{para:sensitive}} To explore further the sensitivity trend, we fit the data with a psychometric function by assuming a linear model for $\Sigma_{\nuis,\nuis^\star}$ and test for a significant negative slope. None of the slopes are significantly different from $0$ at the population level. At the individuals' level, among all conditions and subjects, we find that $13$ out of $21$ slopes were significantly decreasing. Therefore, we interpret this as a possible decrease in sensitivity at higher $z$ seen in $13$ out of $21$ of the cases, but one which shows large individual differences in sensitivity trends.

\paragraph{Qualitative results summary}
\begin{enumerate}[label=(\alph*)]
\item \label{qual1} spatial frequency has a positive effect on perceived speed ($\mu_{\nuis,\nuis^\star}$ increases as $z$ increases),
\item \label{qual2} the inverse sensitivity remains constant or is decreasing with spatial frequency (resp. $ \Sigma_{\nuis,\nuis^\star} $ does not depend on $z$ or decreases as $z$ increases) but there are large individual differences in this sensitivity change.
\end{enumerate}
In the next section, we detail a Bayesian observer model to account for these observed effects. 

\subsection{Observer Model}
\label{sec:obs-model}

We list here the general assumptions underlying our model:
\begin{enumerate}[label=(\roman*)]
\item \label{assump1} the observer performs abstract measurement of the stimulus denoted by a real random variable $M$,
\item \label{assump2} the observer estimates speed using an estimator based on the posterior speed distribution $ \PP_{\Q|\M} $,
\item \label{assump3} the posterior distribution is implicit, Bayes rule states that $ \PP_{\Q|\M} \propto \PP_{\M|\Q}\PP_{\Q} $,
\item \label{assump4} the observer knows all other stimulus parameters (in particular the spatial frequency $\nuis$),
\item \label{assump5} the observer takes a decision without noise.
\end{enumerate}
These asssumptions corresponds to the ideal Bayesian observer model. We detail below the relation between this model and the psychometric bias and inverse sensitivity $(\mu_{\nuis,\nuis^\star},\Sigma_{\nuis,\nuis^\star})$. We also give details to derive the likelihood directly from the MC model and discuss the expected consequences.

\subsubsection{Ideal Bayesian Observer}
\label{sec:idealObs}
The assumptions~\ref{assump1}-\ref{assump5} correspond to the methodology of the Bayesian observer used for instance in~\textcite{Stocker06,SotiropoulosVR,jogan2015signal}. This previous work provides the foundation for the work on Bayesian Observer models in perception on which we build our modifications accounting for our naturalistic dynamic stimulus case. We assume that the posterior speed distribution may depend on spatial frequency because any observed effects must come from the change in spatial frequency and the effect it may have on the likelihood. This assumption is also motivated by a body of empirical evidence showing consistent effects of spatial frequency changes on speed estimation~\citep{brooks2011contrast,vacher2015biologically}. Findings from primate neurophysiology probing extra-striate cortical neurons with compound gratings also show that speed is estimated by neural units whose speed response (i.e. not just response variance associated with likelihood widths) is highly dependent on spatio-temporal frequency structure~\citep{Priebe03,Perrone01}.  Finally, we also assume that the observer measures speed using a Maximum A Posteriori (MAP) estimator
\eql{\label{eq-map}
\begin{split}
	\hat \q(\m) & = \uargmax{\q} \PP_{\Q|\M,\Nuis}(\q|\m,\nuis) \\
						& = \uargmin{\q} [ -\log(\PP_{\M|\Q,\Nuis}(\m|\q,\nuis)) - \log(\PP_{\Q|\Nuis}(\q|\nuis)) ]
\end{split}	
	}
computed from the internal representation $\m \in \RR$ of the observed stimulus. Note that the distribution of measurements (the likelihood) and the prior are both conditioned on spatial frequency $\nuis$. As the likelihood is also obviously conditioned on speed, we denote measurement as $M_{\q,\nuis}$. To simplify the numerical analysis, we assume a Gaussian likelihood (in log-speed domain), with a variance independent of $\q$ consistently with previous literature~\citep{Stocker06,SotiropoulosVR,jogan2015signal}. Furthermore, we assume that the prior is Laplacian (in log-speed domain) as this gives a good description of the a priori statistics of speeds in natural images~\citep{Dong10}: \eql{\label{eq-likelihood-prior}
	\PP_{\M|\Q,\Nuis}(\m|\q,\nuis) = \frac{1}{\sqrt{2\pi} \sigma_\nuis} e^{ -\frac{|\m-\q|^2}{2\sigma_\nuis^2} }
	\qandq
	\PP_{\Q|\Nuis}(\q|\nuis)=\PP_\Q(\q) \propto e^{a \q}
}
where $a<0$. 
\begin{rem}
\label{rem:az}
We initially assume that the posterior speed distribution is conditioned on spatial frequency, thus the likelihood and prior distributions also depend on spatial frequency. However, there is currently no conclusive support in favor of a spatial frequency dependent speed prior in the literature, but evidence of spatial frequency influencing speed estimation is discussed in the previous paragraph. Therefore, only the likelihood width $\sigma_z$ depends on spatial frequency $z$ and the log-prior slope $a$ does not. We discuss in more details the choice of the likelihood and its dependence on spatial frequency in Section~\ref{sec:lkl-choice}.
\end{rem} 
Figure~\ref{fig:inference} shows an example of how the likelihood and prior described in equation~\eqref{eq-likelihood-prior} combine into a posterior distribution that resembles a shifted version of the likelihood. In practice, we are able to compute the distribution of the estimates $ \hat \q(\M_{\q,\nuis}) $ as stated in the following proposition.
\begin{prop}
\label{prop:distrib-estim}
In the special case of the MAP estimator~\eqref{eq-map} with a parameterization defined in equation~\eqref{eq-likelihood-prior}, one has
\eql{\label{eq:distrib-estim}
	\hat\q(\M_{\q,\nuis}) \sim \Nn(\q+a \sigma^2_\nuis, \sigma^2_\nuis).
}
\end{prop}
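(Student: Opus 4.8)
The plan is to first obtain a closed-form expression for the MAP estimator $\hat\q(\m)$ as an explicit function of the measurement $\m$, and then to push forward the Gaussian law of $\M_{\q,\nuis}$ through this function. Plugging the likelihood and prior of equation~\eqref{eq-likelihood-prior} into the variational characterization~\eqref{eq-map}, the term $-\log \PP_{\M|\Q,\Nuis}(\m|\q,\nuis)$ equals $\frac{|\m-\q|^2}{2\sigma_\nuis^2}$ up to a constant independent of $\q$, and $-\log \PP_{\Q|\Nuis}(\q|\nuis)$ equals $-a\q$ up to a constant. Hence $\hat\q(\m)$ minimizes the strictly convex quadratic $\q \mapsto \frac{|\m-\q|^2}{2\sigma_\nuis^2} - a\q$.

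Second, I would minimize this quadratic via the first-order condition: differentiating in $\q$ gives $\frac{\q-\m}{\sigma_\nuis^2} - a = 0$, so $\hat\q(\m) = \m + a\,\sigma_\nuis^2$. In particular the MAP estimator is an affine (indeed, a translation) map of the internal measurement, with the additive offset $a\,\sigma_\nuis^2$ carrying the sign of $a$, i.e. negative here; this offset is precisely the source of the perceptual bias and already displays the announced dependence on $\sigma_\nuis^2$.

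Third, by assumption~\ref{assump1} together with the first identity in equation~\eqref{eq-likelihood-prior}, the measurement $\M_{\q,\nuis}$ is distributed as $\Nn(\q,\sigma_\nuis^2)$. Applying the affine change of variable $\m \mapsto \m + a\,\sigma_\nuis^2$ — which maps a Gaussian to a Gaussian, translating the mean and leaving the variance unchanged since the slope is $1$ — yields $\hat\q(\M_{\q,\nuis}) = \M_{\q,\nuis} + a\,\sigma_\nuis^2 \sim \Nn(\q + a\,\sigma_\nuis^2,\ \sigma_\nuis^2)$, which is exactly~\eqref{eq:distrib-estim}. The only subtlety — and the one place a fully rigorous argument must be careful — is that since the analysis is carried out in the log-speed domain where the exponential prior $e^{a\q}$ is supported on a half-line, the unconstrained solution $\m + a\,\sigma_\nuis^2$ coincides with the true MAP value only when it lies in the support; for the reference speeds considered here the offset $|a|\sigma_\nuis^2$ is negligible compared with the log-speed of the stimuli, so the boundary constraint is inactive with overwhelming probability and~\eqref{eq:distrib-estim} holds on the regime of interest (equivalently, one may extend the prior to all of $\RR$).
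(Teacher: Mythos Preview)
Your argument is correct and follows exactly the route the paper takes: the paper (inside the proof of Proposition~\ref{prop:psycurve}) simply records the closed form $\hat\q(\m)=\m+a\sigma_\nuis^2$ and then reads off $\hat\q(\M_{\q,\nuis})\sim\Nn(\q+a\sigma_\nuis^2,\sigma_\nuis^2)$ from $\M_{\q,\nuis}\sim\Nn(\q,\sigma_\nuis^2)$. Your derivation is more detailed and your remark about the half-line support of the prior is a valid caveat that the paper does not address.
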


\begin{figure}[!ht]
\begin{center}
\includegraphics[scale=1]{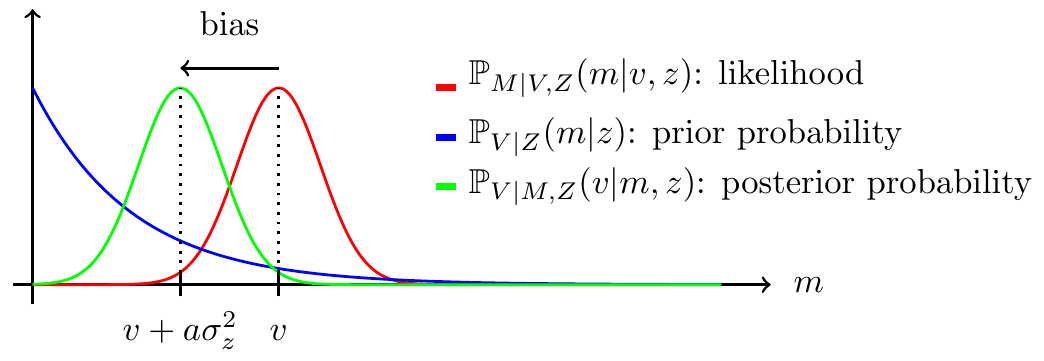}
\caption{Multiplying a Gaussian likelihood by a Laplacian prior gives a Gaussian posterior that is similar to a shifted version of the likelihood.}
\label{fig:inference}
\end{center}
\end{figure}

Once the observer has estimated the speed of two presented stimuli, he must take a decision to judge which stimulus was faster. Following assumption~\ref{assump5}, the decision is ideal in the sense that it is performed without noise. In other words, the observer compares the two speeds and decides whether $(\hat\q(\m_{\q,\nuis^\star}),\hat\q(\m_{\q^\star,\nuis}))$ belongs to the decision set $E = \{(\q_1,\q_2) \in \RR^2 | \q_1>\q_2 \}$ or not. Thus, we define the theoretical psychometric curve of an ideal Bayesian observer as 
\eq{
	\phi_{\q^\star,\nuis^\star}( \q,\nuis ) \eqdef \EE( \hat\q(\M_{\q,\nuis^\star}) > \hat\q(\M_{\q^\star,\nuis}) )
} 
Following proposition~\ref{prop:distrib-estim}, in our special case of Gaussian likelihood and Laplacian prior, the psychometric curve can be computed in closed form. 

\begin{prop}
\label{prop:psycurve}
In the special case of the MAP estimator~\eqref{eq-map} with a parameterization defined in equation~\eqref{eq-likelihood-prior}, one has
\eql{\label{eq-psycurve-theo}
	\phi_{\q^\star,\nuis^\star}( \q,\nuis ) 
	=
	\phi_{\q^\star,\nuis^\star}^{a,\sigma}( \q,\nuis ) 
	\eqdef
	\psi\pa{ 
		\frac{ 
			\q-\q^\star + a (\sigma_{\nuis^\star}^2  - \sigma_{\nuis}^2)
		}{ 
			\sqrt{ \sigma_{\nuis^\star}^2 + \sigma_{\nuis}^2  } 
		}
	}
}
where $\psi$ is defined in equation~\eqref{eq:psych-func}.
\end{prop}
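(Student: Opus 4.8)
The plan is to reduce the two–dimensional event defining $\phi_{\q^\star,\nuis^\star}(\q,\nuis) = \EE\pa{ \hat\q(\M_{\q,\nuis^\star}) > \hat\q(\M_{\q^\star,\nuis}) }$ to a single Gaussian tail probability. First I would invoke Proposition~\ref{prop:distrib-estim} on each of the two stimulus intervals separately: the estimate of the interval carrying parameters $(\q,\nuis^\star)$ satisfies $\hat\q(\M_{\q,\nuis^\star}) \sim \Nn(\q + a\sigma_{\nuis^\star}^2,\sigma_{\nuis^\star}^2)$, and the one carrying $(\q^\star,\nuis)$ satisfies $\hat\q(\M_{\q^\star,\nuis}) \sim \Nn(\q^\star + a\sigma_{\nuis}^2,\sigma_{\nuis}^2)$. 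The key structural input, which I would state explicitly at the outset, is that the two stimulus presentations are physically independent realizations of the Motion Cloud, so by assumption~\ref{assump1} the internal measurements $\M_{\q,\nuis^\star}$ and $\M_{\q^\star,\nuis}$ are independent; since $\hat\q$ is a deterministic map of the measurement, the two estimates are independent Gaussian variables.

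Next I would introduce the difference $D \eqdef \hat\q(\M_{\q,\nuis^\star}) - \hat\q(\M_{\q^\star,\nuis})$. A difference of independent Gaussians is Gaussian with mean the difference of the means and variance the sum of the variances, hence $D \sim \Nn\pa{\q - \q^\star + a(\sigma_{\nuis^\star}^2 - \sigma_{\nuis}^2),\ \sigma_{\nuis^\star}^2 + \sigma_{\nuis}^2}$. By assumption~\ref{assump5} the decision is noiseless, so the event $(\hat\q(\M_{\q,\nuis^\star}),\hat\q(\M_{\q^\star,\nuis})) \in E$ with $E = \{(\q_1,\q_2)\mid \q_1 > \q_2\}$ is exactly $\{D>0\}$, giving $\phi_{\q^\star,\nuis^\star}(\q,\nuis) = \PP(D>0)$.

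Finally I would standardize. Writing $m$ and $s$ for the mean and standard deviation of $D$, one has $\PP(D>0) = \PP\pa{(D-m)/s > -m/s} = 1 - \psi(-m/s) = \psi(m/s)$, where the last step uses the symmetry $\psi(-t) = 1-\psi(t)$ of the standard normal cumulative function defined in~\eqref{eq:psych-func}. Substituting $m = \q - \q^\star + a(\sigma_{\nuis^\star}^2 - \sigma_{\nuis}^2)$ and $s = \sqrt{\sigma_{\nuis^\star}^2 + \sigma_{\nuis}^2}$ yields precisely~\eqref{eq-psycurve-theo}, and identifies $\mu_{\nuis,\nuis^\star} = a(\sigma_{\nuis}^2 - \sigma_{\nuis^\star}^2)$ and $\Sigma_{\nuis,\nuis^\star} = \sqrt{\sigma_{\nuis^\star}^2 + \sigma_{\nuis}^2}$ by comparison with~\eqref{eq:psych-func}.

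Everything after the first paragraph is elementary manipulation of Gaussian laws, so the only genuinely delicate point is the independence of the two internal measurements; I expect that to be the one place a reader might object, and I would therefore make the underlying modeling assumption (independent presentations, hence independent measurements) fully explicit rather than leave it implicit in assumption~\ref{assump1}.
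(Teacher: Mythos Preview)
Your proposal is correct and follows essentially the same route as the paper's own proof: apply Proposition~\ref{prop:distrib-estim} to each interval, form the difference of the two (independent) Gaussian estimates, and read off the probability $\PP(D>0)$ via the standard normal cumulative function. The paper's version is terser---it leaves the independence of the two measurements implicit and skips the standardization step---so your explicit handling of these points is a welcome clarification rather than a departure in method.
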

\begin{proof}
See Appendix~\ref{proof:prop-psycurve}.
\end{proof}
Proposition~\ref{prop:psycurve} provides the connection between the Bayesian model parameters and the classical psychometric measures of bias and sensitivity. In particular, it explains the heuristic sigmoidal templates commonly used in psychophysics, see Section~\ref{sec:psych-res}. An example of two psychometric curves is shown in Figure~\ref{fig:psych_curve}. We have the following relations:

\noindent\begin{minipage}{0.5\linewidth}
\eql{\label{eq:rel-bias-bayes}
\mu_{\nuis,\nuis^\star} =  a(\sigma_{\nuis}^2 - \sigma_{\nuis^\star}^2) 
}
\end{minipage}%
\begin{minipage}{0.5\linewidth}
\eql{\label{eq:rel-sensitive-bayes}
\text{and} \quad \quad \Sigma_{\nuis,\nuis^\star}^2 = \sigma_{\nuis^\star}^2 + \sigma_{\nuis}^2.
}
\end{minipage}

\begin{rem}
The experiment allows us to estimate bias and inverse sensitivity $(\mu_{\nuis,\nuis^\star},\Sigma_{\nuis,\nuis^\star})$. Knowing these parameters, it is possible to recover parameters of the ideal Bayesian observer model. Equation~\ref{eq:rel-sensitive-bayes} has a unique solution and equation~\ref{eq:rel-bias-bayes} can be solved using the least square estimator. 
\end{rem}
\begin{rem}
Under this model, a positive bias comes from a decrease in likelihood width and a negative log-prior slope. As concluded in Section~\ref{sec:psych-res}, we observe a significant decrease in inverse sensitivity in $13$ out of $21$ subjects and conditions. Therefore, the model, when fitted to the data, will force the likelihood width to decrease. Further experiments will be necessary to verify the significance of this observation. Yet, the model is well supported by the literature (see~\cite{Stocker06,jogan2015signal,SotiropoulosVR}) and is compatible with the properties of the stimuli (see Section~\ref{sec:lkl-choice}).
\end{rem}
\begin{figure}[ht]
\begin{center}
\includegraphics[scale=1]{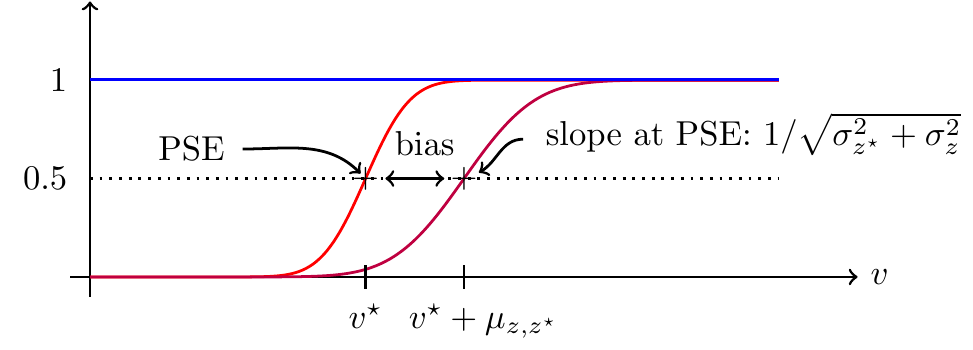}
\caption{The shape of the psychometric function follows the estimation of the two speeds by the Bayesian inference described in Figure~\ref{fig:inference}. This figure illustrates Proposition~\ref{prop:psycurve}. The bias ensues from the difference between the bias on the two estimated speeds. } 
\label{fig:psych_curve}
\end{center}
\end{figure}
%

\subsubsection{Discussion: Likelihood}
\label{sec:lkl-choice}

A MC $I_{\q,\nuis}$  is a random Gaussian field of power spectrum defined by equation~\eqref{eq-dfn-mc-spectrum}, with central speeds $\vz=(\q,0)$ and central spatial frequency $\nuis$ (the other parameters being fixed, as explained in Section~\ref{sec-methods}). Assuming that the abstract measurements correspond to the presented frames \textit{ie} $ \M_{\q,\nuis} = I_{\q,\nuis} $ it is possible to use the MC generative model as a likelihood. In the absence of a prior, the MAP estimator is equal to the Maximum Likelihood Estimator (MLE)
\eql{\label{eq-mle-mc}
	\hat \q^{}(m) = \hat \q^{\text{MLE}}(i) = \uargmin{\q}  -\log(\PP_{I|\Q,\Nuis}(i|\q,\nuis)). 
}

Thanks to the sPDE formulation, it is possible to give a simple rigorous expression for $ -\log(\PP_{I|\Q,\Nuis}(i|\q,\nuis)) $ in the case of discretized clouds satisfying the AR(2) recursion equation~\eqref{eq-ar-discr-translated}.  In this case, for some input video $I_{\q,\nuis}=(I^{(\ell)})_{\ell=1}^L$, the log-likelihood reads 
\eq{
	-\log(\PP_{I|\Q,\Nuis}(I_{\q,\nuis}|\q,\nuis)) = \tilde Z_I + K_{v_0}(I_{\q,\nuis})
	\qwhereq
}
\eq{
	K_{v_0}(I_{\q,\nuis}) \eqdef
	\frac{1}{\De^4} \sum_{\ell=1}^L \int_\Om 
	|
		K_W \star I^{(\ell+1)}(x) -
		\Uu_{v_0} \star K_W \star I^{(\ell)}(x)  -  
		\mathcal{V}_{v_0} \star K_W \star   I^{(\ell-1)}(x) 
	|^2 
	\d x	
}
where $\Uu_{v_0}$ and $\mathcal{V}_{v_0}$ are defined in equation~\eqref{eq-defn-U-V} and where $K_W$ is the spatial filter corresponding to the square-root inverse of the covariance $\Si_W$, that is, which satisfies $\hat K_W(\xi) \eqdef \hsiW(\xi)^{-1}$.
This convenient formulation can be used to re-write the MLE estimator of the horizontal speed $\q$ parameter of a MC as 
\eql{\label{eq-optim-mle}
	\hat \q^{\text{MLE}}(i) = \uargmin{\q} K_{v_0}(i)
	\qwhereq
	\vz = (\q,0) \in \RR^2
}
where we used the fact that $\tilde Z_I$ is independent from $\vz$.

The solution to this optimization problem with respect to $v$ is computed using the Newton-CG optimization method implemented in the Python library \texttt{scipy}. In Figure~\ref{fig:mle}a, we show a histogram of speed estimates $\hat \q^{\text{MLE}}(I_{\q,\nuis})$ performed over 200 Motion Clouds generated with speed $\q = 6~\si{\degree/\second}$ and spatial frequency $\nuis = 0.78~\si{c/\degree}$. In Figure~\ref{fig:mle}b, we show the evolution of the standard deviation of speed estimates $\hat \q^{\text{MLE}}(I_{\q,\nuis})$ as a function of spatial frequencies $\nuis \in \{0.47~\si{c/\degree}, 0.62~\si{c/\degree}, 0.78~\si{c/\degree}, 0.94~\si{c/\degree}, 1.28~\si{c/\degree}\}$. For each spatial frequency, estimates are again similarly obtained over a set of 200 Motion Clouds generated with speed $\q = 6~\si{\degree/\second}$. First, we observed that $\hat \q^{\text{MLE}}(I_{\q,\nuis})$ is well approximated by a Gaussian random variable with mean $\q$. Second, the standard deviation of these estimates decreases when the spatial frequency increases. The two conclusions follow the fact that our model is Gaussian and that we impose the relation $ \sr = 1/(t^\star \z) $ \textit{ie} that standard deviation of speed is inversely proportional to spatial frequency. The decreasing trend combined with a prior for slow speed $ a < 0$ would reproduce the positive bias of spatial frequency over speed perception observed in Section~\ref{sec:psych-res}.
\begin{figure}[!ht]
\begin{center}
\includegraphics[scale=1]{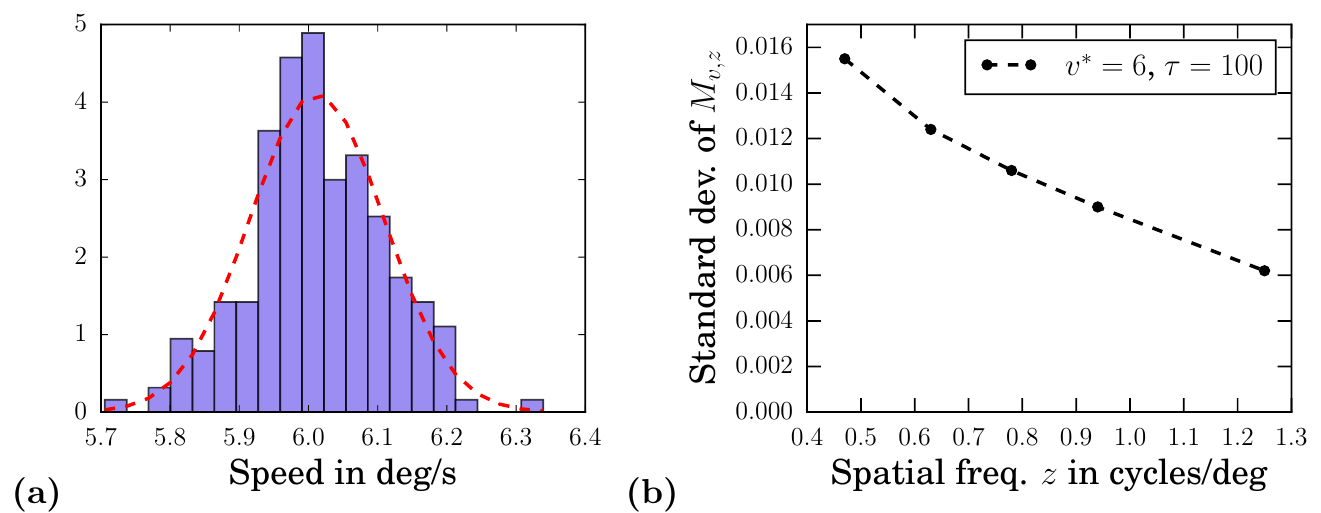}
\caption{Simulation of the speed distributions of a set of motion clouds with the experimentally tested parameters. \textbf{(a)} Histogram of estimates of $\hat \q^{\text{MLE}}(I_{\q,\nuis})$ for $z=0.8~\si{c/\degree}$ defined by equation~\eqref{eq-mle-mc}. These estimates are well approximated by a Gaussian distribution (red dotted line). \textbf{(b)} Standard deviation of estimates of $\hat \q^{\text{MLE}}(I_{\q,\nuis})$ as a function of~$z$. The standard deviation of these estimates is inversely proportional to the spatial frequency $z$.}
\label{fig:mle}
\end{center}
\end{figure}
If a human subject were estimating speed in such an optimal way, equation~\eqref{eq:rel-sensitive-bayes} indicates that inverse sensitivity $\Sigma_{\nuis,\nuis^\star}$ would also be inversely proportional to spatial frequency. Yet, the primary analysis conducted in Section~\ref{sec:psych-res} do not give a clear trend for the inverse sensitivity. As a consequence, the analysis conducted in Section~\ref{sec:psych-res} is ambiguous and does not allow us to make definitive conclusions about the compatibility of the MC model and the existing literature with the observed human performances. 

%

\subsection{Likelihood and Prior Estimation}
\label{sec:likeli-prior}
In order to fit this model to our data we use an iterative two-step method, each consisting in minimizing the Kullback-Leibler divergence between the model and its samples. This process is the equivalent of a maximum likelihood estimate. The first step consists in fitting each psychometric curve individually, then, for the second step, to use the results as a starting point to fit all the psychometric curves together. Numerically,  we used the Nelder-Mead simplex method as implemented in the python library \texttt{scipy}. 
\begin{itemize}[leftmargin=18mm]
\label{item:fit-procedure}
\item[Step 1:] for all $z,z^\star$, initialized at a random point, compute 
\eq{
(\bar \mu,\bar \Sigma) = \uargmin{\mu,\Sigma} \sum_\q \KL{\hat \phi_{\q^\star,\nuis^\star}}{\phi_{\q^\star,\nuis^\star}^{\mu,\Sigma}}
}
where $\phi_{\q^\star,\nuis^\star}^{\mu,\Sigma}$ is defined in equation~\eqref{eq:psych-func}.
\item[Step 2:]  solve the equations~\eqref{eq:rel-bias-bayes} and~\eqref{eq:rel-sensitive-bayes} between $(\bar \mu,\bar \Sigma)$ and $(\bar a, \bar \sigma)$, initialize at $(\bar a, \bar \sigma)$ and compute 
\eq{
(\hat a , \hat \sigma) = \uargmin{a,\sigma} \sum_{\nuis,\nuis^\star} \sum_\q \KL{\hat \phi_{\q^\star,\nuis^\star}}{\phi_{\q^\star,\nuis^\star}^{a,\sigma}}
}
where $\phi_{\q^\star,\nuis^\star}^{a,\sigma}$ is defined in equation~\eqref{eq-psycurve-theo}.
\end{itemize}

We use a repeated stochastic initialization in the first step in order to overcome the presence of local minima encountered during the fitting process. The approach was found to exhibit better results than a direct and global fit (third point). 



\subsection{Modeling results}
\label{sec-results}

We use the Bayesian formulation detailed in Section~\ref{sec:idealObs} and the fitting process described in Section~\ref{sec:likeli-prior} to estimate, for each subject, the likelihood widths and the corresponding log-prior slopes under the tested experimental conditions. We plot in Figure~\ref{fig:psych} the fit of bias and inverse sensitivity for the sigmoid model~\ref{sec:psych-res} and Bayesian model~\ref{sec:idealObs} averaged over subjects. Figure~\ref{fig:model} displays the corresponding likelihood widths and log-prior slopes for the Bayesian model also averaged over subjects. Error bars correspond to the standard deviation of the mean. 

\paragraph{Measured biases and inverse sensitivity}
As shown in Figure~\ref{fig:psych}, both models M2 and M3 correctly account for the biases and inverse sensitivity estimated with model M1 (see Section~\ref{sec:psych-res}) except for conditions A1 and A2. For condition A1 (Figure~\ref{fig:psych}a), the bias is underestimated by model M2 and M3 compared to model M1. For condition A2 (Figure~\ref{fig:psych}e), the inverse sensitivity is overestimated by model M2 and M3 compared to model M1. The observed differences come from the fact that in models M2 and M3 the overlapping spatial frequencies of conditions A1 and A2 are pulled together. As a consequence, the fit is more constrained than for model M1 and is therefore smoother. While that discrepancy does not affect our conclusion, it raises the question of pulling different overlapping conditions together. The overlapping tested spatial frequencies are together whereas they were collected with different reference spatial frequencies such that the sensitivity of each of the the psychometric speed measurement scales appears to have been different. Despite averaging over subjects, the Bayesian estimates of inverse sensitivity appear smoother than the sigmoid estimates (see Figure~\ref{fig:psych}f and~\ref{fig:psych}h). Finally, a clearer decreasing trend is visible in the Bayesian estimates of inverse sensitivity. 

\paragraph{Corresponding sensory likelihood widths}
There is a systematic decreasing trend within the likelihood width fits in Figure~\ref{fig:model}a-d excepting Figure~\ref{fig:model}c which shows an inverted U-shape. The fact that all subjects did not run all experimental conditions explains this difference ($2$ subjects out of $4$ show a U-shape bias, see Figure~\ref{fig:psych}c). Subject to subject variability is similar for all conditions except for the least temporal variability for which it is smaller (Figure~\ref{fig:model}c). 

\paragraph{Corresponding log-prior slopes}
The log-prior slope estimates have a high subject to subject variability for conditions A3-A5 (Figure~\ref{fig:model}f-h) compared to conditions A1-A2 (Figure~\ref{fig:model}e). The high inter-subject variability is expected in speed discrimination tasks and in the case of conditions A4-A5, this is particularly magnified by two subjects that have an extremely low value for $a_z$ (their small markers are not visible in Figure~\ref{fig:model}e-f). 

\begin{figure}
\begin{center}
\makebox[\textwidth]{\includegraphics[scale=0.36]{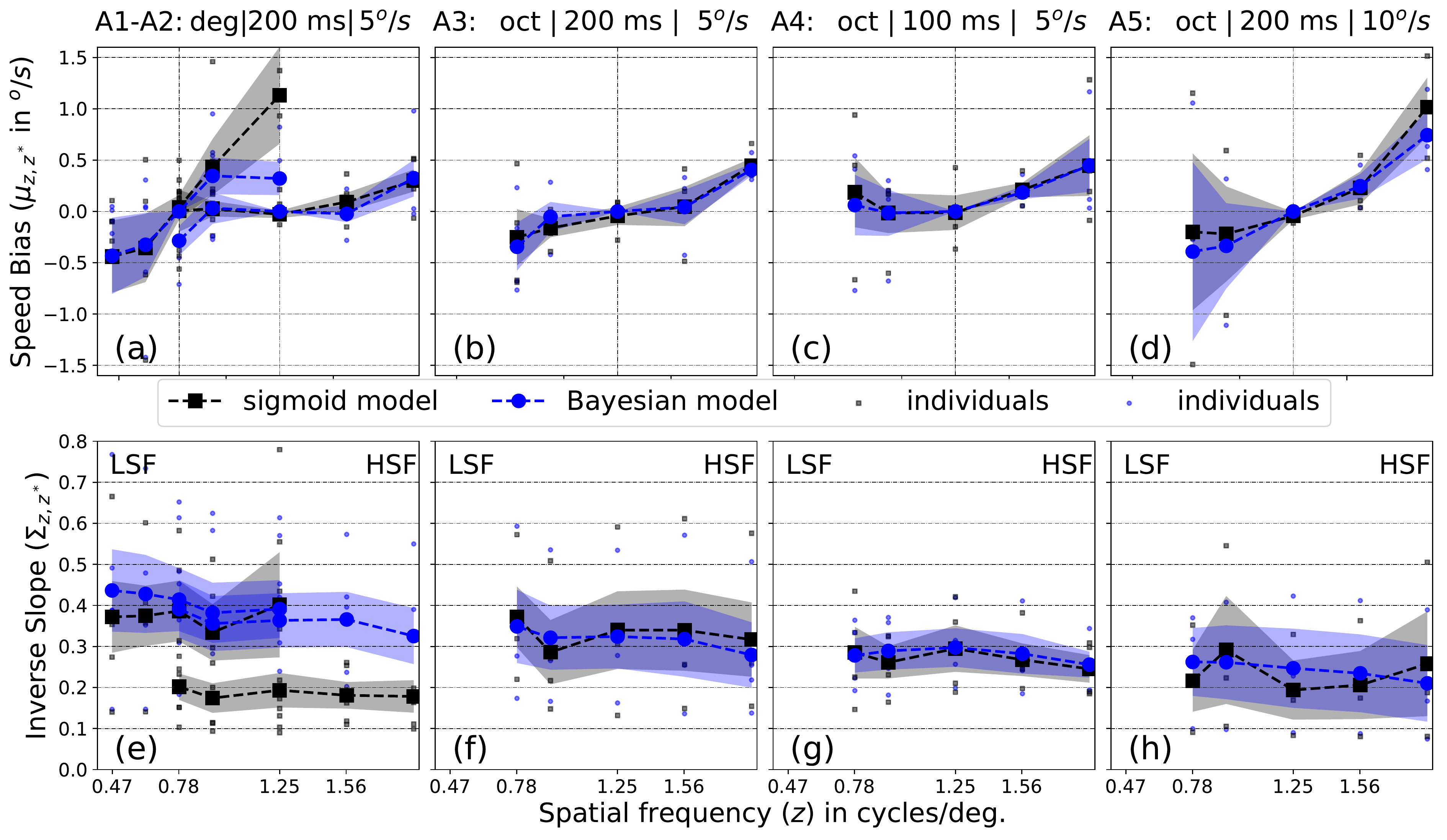}}
\caption{The model fitted speed biases (top row) and inverse sensitivity (bottom row) for the different conditions for the Bayesian model (blue) and the sigmoid model (black).  \textbf{(a-d)} \textit{Speed biases generally increase with increasing spatial frequency.} \textbf{(e-h)} \textit{Inverse sensitivity tend to decrease for the Bayesian model but are configured not to do so for the sigmoid model.} The parameters are indicated above, respectively: bandwidth in octave (oct.) or degree (deg.), value of stimulus lifetime $t^\star$ and reference speed $v^\star$.  Small markers represent individual results and large markers represent population average.  From left to right: conditions A1-A2, A3, A4 and A5.}
\label{fig:psych}
\end{center}
\end{figure}
\begin{figure}
\begin{center}
\makebox[\textwidth]{\includegraphics[scale=0.36]{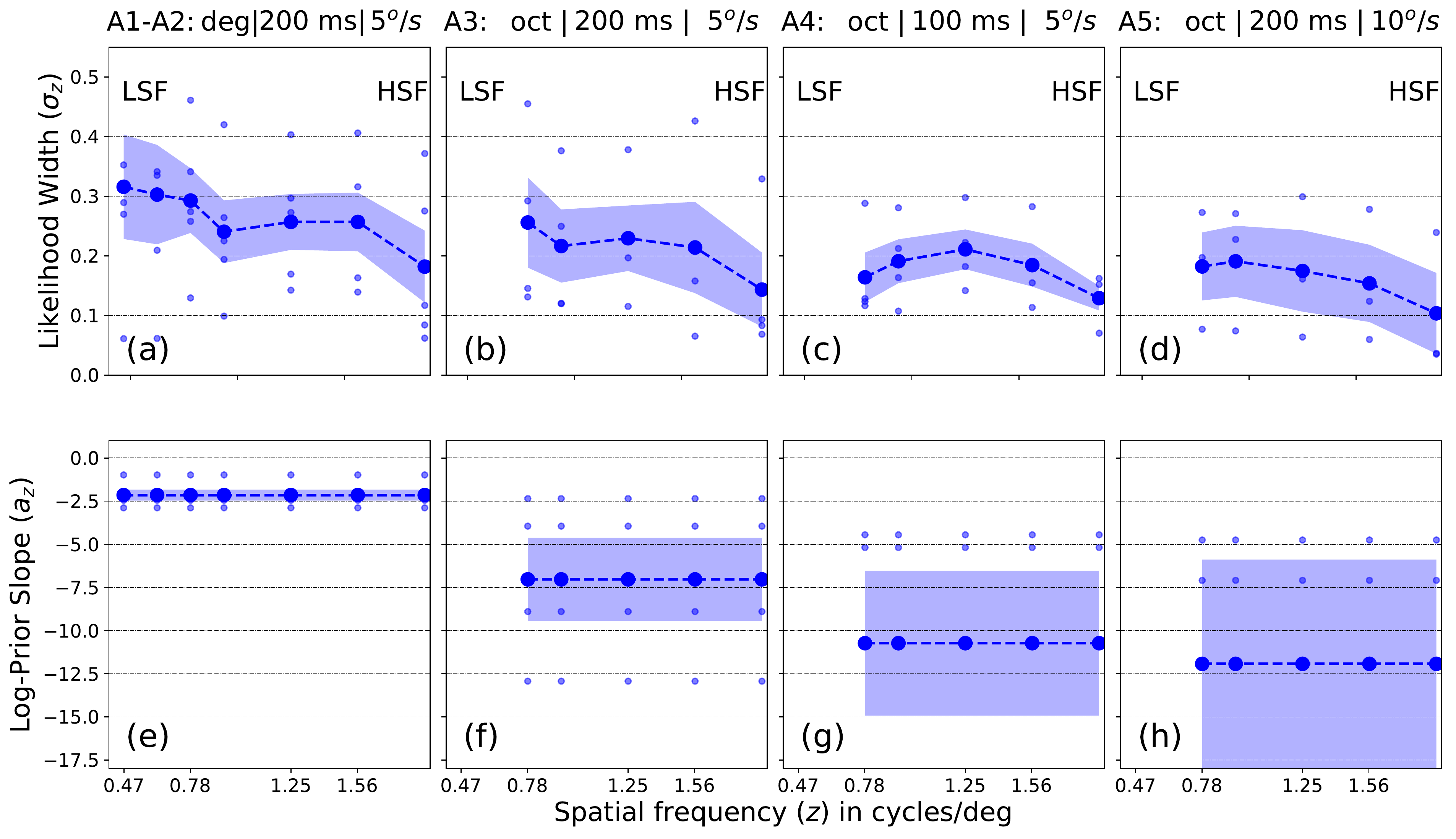}}
\caption{The model fitted likelihood widths (top row) and log-prior slopes (bottom row) for the different conditions for the Bayesian model. \textbf{(a-d)} \textit{Likelihood widths tend to decrease with increasing spatial frequency.} \textbf{(e-h)} \textit{Log-prior slopes are negative and higly variable between subjects.} The parameters are indicated above, respectively: bandwidth in octave (oct.) or degree (deg.), value of stimulus lifetime $t^\star$ and reference speed $v^\star$.  Small markers represent individual results and large markers represent population average. From left to right: conditions A1-A2, A3, A4 and A5.}
\label{fig:model}
\end{center}
\end{figure}
\begin{figure}
\begin{center}
\makebox[\textwidth]{\includegraphics[scale=0.4]{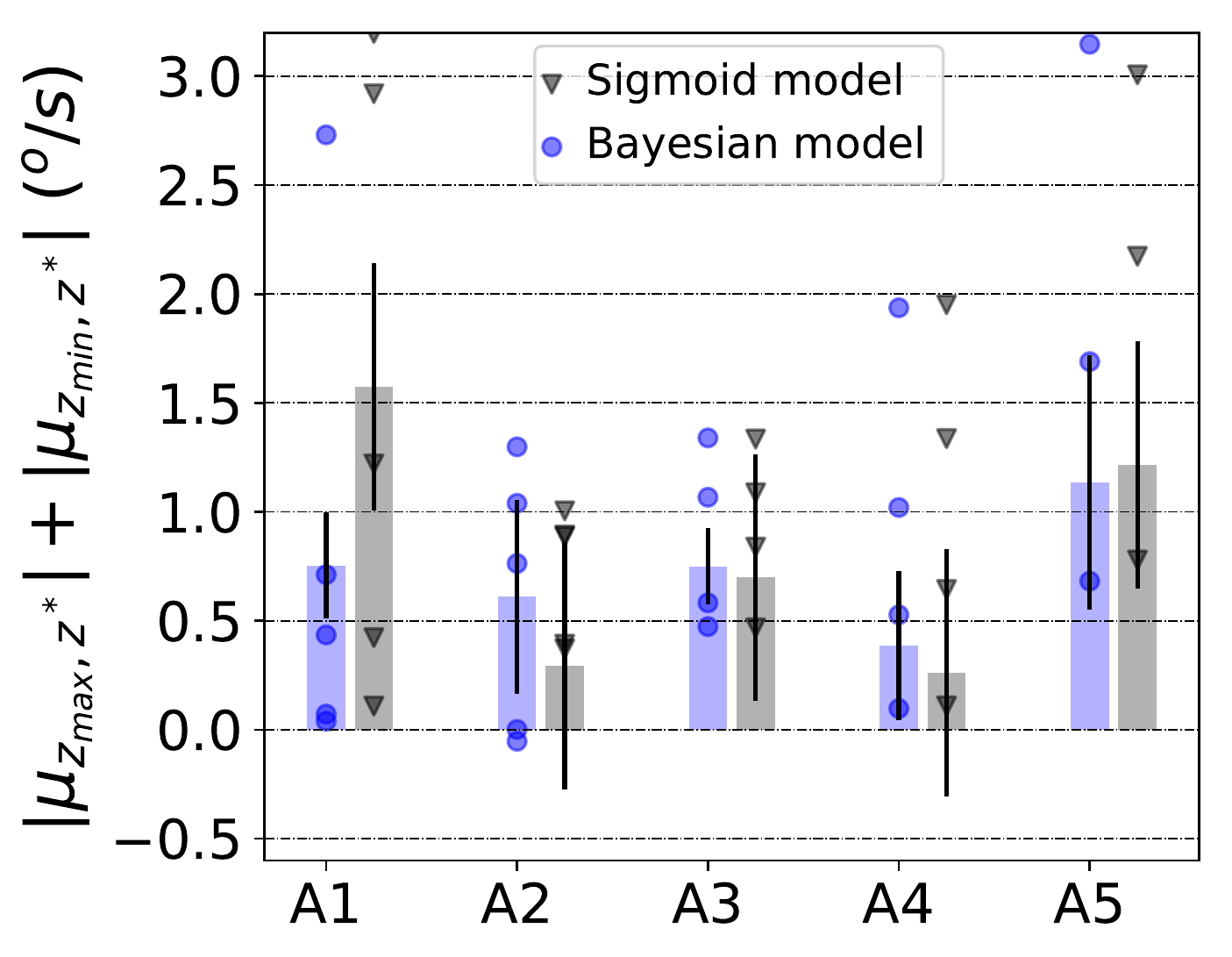}}
\caption{\textit{Biases amplitude.} Sum of the absolute biases at lower and higher spatial frequency averaged over participants. Small markers represent individual results, bars represent population average and error bars represent one standard error of the mean.}
\label{fig:biases-ampli}
\end{center}
\end{figure}

\subsection{Insights into Human Speed Perception}
We exploited the principled and ecologically motivated parameterization of MC to study biases in human speed judgements under a range of parametric conditions. Primarily, we consider the effect of scene scaling on perceived speed, manipulated via central spatial frequencies in a similar way to previous experiments which have shown spatial frequency induced perceived speed biases~\citep{brooks2011contrast,Smith90}. In general, our experimental result confirms that higher spatial frequencies are consistently perceived to be moving faster than compared lower frequencies; which is the same result as reported in a previous study using both simple gratings and compounds of paired gratings, the second of which can be considered as a relatively broadband bandwidth stimulus~\citep{brooks2011contrast} compared to single grating stimuli, without considering the inhibitive interactions we know to occur when multiple gratings are superimposed~\citep{Priebe03}. In that work, they noted that biases were present, but slightly reduced in the compound (broadband) stimuli. That conclusion is consistent with a more recent psychophysics manipulation in which up to four distinct composite gratings were used in relative speed judgements: Estimates were found to be closer to veridical as bandwidth was increased by adding additional components from the set of four, but increasing spatial frequencies generally biased towards faster perceived speed, even if individual participants showed different trends~\citep{jogan2015signal}. Indeed, findings from primate neurophysiology studies have also noted that while responses are biased by spatial frequency, the tendency towards true speed sensitivity (measured as the proportion of individual neurons showing speed sensitivity) increases when broadband stimulation is used~\citep{Priebe03,Perrone01}. A model of visual motion sensitivity with a hierarchical framework, selectively reading from and optimally decoding V1 inputs in an MT layer has also been tested. It was found to be consistent with human speed sensitivity to natural images~\citep{burge2015optimal}.   
        
It is increasingly being recognized that linear systems approaches to interrogating visual processing with single sinusoidal luminance grating inputs represents a powerful but limited approach to study speed perception, as they fail to capture the fact that naturalistic broadband frequency distributions may support speed estimation~\citep{brooks2011contrast,meso2014towards,meso2009speed,gekas2017normalization}. A linear consideration for example may not fully account for the fact that estimation in the presence of multiple sinusoidal components results in linear optimal combination perfoming best among alternatives~\citep{jogan2015signal}. In that case, the simple monotonic increase in perceived speed predicted by the optimal model when additional components were added to the compound is not seen in the data particularly in the difference between 3 and 4 components. This may be due to interaction between components which are not fully captured by this optimal linear model. The current work seeks to extend the body of previous studies by looking at spatial-frequency-induced biases, using a parametric configuration in the form of the Motion Clouds, which allow a manipulation across a continuous scale of frequency and bandwidth parameters. The effect of frequency interactions across the broadband stimulus defined along the two dimensional spatio-temporal luminance plane allows us to measure the perceptual effect of the projection of different areas (e.g. see Figure~\ref{fig:rpt3D}) onto the same speed line. The measurement would be the result of proposed inhibitory interactions, which occur during spatio-temporal frequency integration for speed perception~\citep{Simoncini12,gekas2017normalization}, which cannot be observed with component stimuli separated by several octaves~\citep{jogan2015signal}. 

We use a slower and a faster speed because previous work using sinusoidal grating stimuli have shown that below the slower range ($< 8~\si{\degree / \second}$ ), uncertainty manipulated through lower contrasts causes an under estimation of speeds while at faster speeds ($>16~\si{\degree / \second}$) it causes an overestimation, an effect which itself is not fully explained by Bayesian models with a prior encouraging slow speeds.~\citep{thompson2006speed,hassan2015perceptual}. Our findings show that biases are larger at the faster speed than the slower one. Biases are also generally lower for the octave-controlled than for the cycle-controlled stimuli, indicating that the underlying system was better at processing the former. 

The Bayesian fitting identifies a decrease in the likelihood width estimates which could explain the biases in over half of our fitted psychometric functions. For cases of the same frequency range where comparable likelihoods are estimated, some conditions -- like the low and high $t^\star$ cases -- have very different prior estimates. This result can be interpreted in light of recent work~\citep{gekas2017normalization}: biases might act along the speed line and an orthogonal scale line within the spatio-temporal space, depending on the spread or bandwidth of the stimulus. While the current work does not resolve some of the ongoing gaps in our understanding of speed perception mechanisms, particularly as it does not tackle contrast-related biases, it shows that known frequency biases in speed perception also arise from orthogonal spatial and temporal uncertainties when RMS contrast is controlled - as it is within the MC stimuli. Bayesian models such as the one we apply, which effectively project distributions in the spatiotemporal plane onto a given speed line in which a linear low speed prior applies~\citep{Stocker06} may be insufficient to capture the effect of spatio-temporal priors which may underlie some of the broad set of empirical results. Individual differences which are pervasive in these experiments may also be associated with internal assumptions which can be considered as priors. Indeed for Bayesian models to fully predict speed perception with more complex or composite stimuli, they often require various elaborations away from the simplistic combination of likelihood and low speed prior~\citep{hassan2015perceptual,gekas2017normalization,jogan2015signal,SotiropoulosVR}. Indeed even imaging studies considering the underlying mechanisms fail to find definitive evidence for the encoding of a slow speed prior~\citep{vintch2014cortical}.     

\section{Conclusions}
In this work,  we have proposed and detailed a generative model for the estimation of the motion of dynamic images based on a formalization of small perturbations from the observer's point of view and parameterized by rotations, zooms and translations. We connected these transformations to descriptions of ecologically motivated movements of both observers and the dynamic world. The fast synthesis of naturalistic textures optimized to probe motion perception was then demonstrated, through fast GPU implementations applying auto-regression techniques with much potential for future experimentation. This extends previous work from~\textcite{Leon12} by providing an axiomatic formulation. Finally, we used the stimuli in a psychophysical task and showed that these textures allow one to further understand the processes underlying speed estimation. We used broadband stimulation to study frequency-induced biases in visual perception, using various stimulus configurations including octave bandwidth and RMS contrast-controlled manipulations, which allowed us to manipulate central frequencies as scale invariant stimulus zooms. We showed that measured biases under these controlled conditions were qualitatively the same at both a faster and a slower tested speed. By linking the stimulation directly to the standard Bayesian formalism, we demonstrated that the sensory representation of the stimulus (the likelihoods) in such models is compatible with the generative MC model in over half of the collected empirical data cases. Together with a slow speed prior, the inference framework correctly accounts for the observed bias. We foresee that more experiments with naturalistic stimuli such as MCs and a consideration of more generally applicable priors will be needed in future.

\section*{Acknowledgments}

We thank Guillaume Masson for fruitful discussions during the project. We also acknowledge Manon Bouy\'e and \'Elise Amfreville for proofreading. Finally, we strongly thank the reviewers for their useful comments that helped us improve the readability and quality of this paper.
LUP was supported by EC FP7-269921, ``BrainScaleS'' and  BalaV1 ANR-13-BSV4-0014-02.
The work of JV and GP was supported by the European Research Council (ERC project SIGMA-Vision). 
AIM and LUP were supported by SPEED ANR-13-SHS2-0006.
\printbibliography[heading=subbibliography]
\appendix
\section{Proofs of Propositions}

In the following, we give the proofs of propositions~\ref{prop-mc-spectrum}, \ref{prop-spde-translation}, \ref{prop-spde-equiv} and \ref{prop:psycurve} introduced in the main article.

\begin{proof}[Proof of Proposition~\ref{prop-mc-spectrum}]
\label{proof:prop-mc-spectrum}
We recall the expression~\eqref{eq-cov-proposition} of the covariance
\eql{\label{eq-cov-proposition-ap}
		\foralls (x,t) \in \RR^3, \quad
		\ga(x,t) = \iiint_{\RR^2} c_{g_\sigma}(\phi_\geom(x-\speed t))  \fv(\speed) \falpha(\geom) \d \speed \d \geom
} 
We denote $(\th,\phi,\z,r) \in \Ga = [-\pi,\pi)^2 \times \RR_+^2$ the set of parameters. 
Denoting $h(x,t) = c_{g_\sigma}( \z R_\th(x - \speed t ) )$, one has, in the sense of generalized functions (taking the Fourier transform with respect to $(x,t)$):
\eq{
	\hat h(\xi,\tau) = \z^{-2} \hat g_\sigma( \z^{-1} R_{\th}(\xi) )^2 \de_{\Vv}(\speed)
	\qwhereq
	\Vv = \enscond{\speed  \in \RR^2 }{ \tau + \dotp{\xi}{\speed} = 0 }.
}
Taking the Fourier transform of equation~\eqref{eq-cov-proposition-ap} and using this computation, the result is that $\hat \ga(\xi,\tau)$ is equal to
\begin{align*}
	\int_\Ga 
	\frac{|\hat g_\sigma\pa{ \z^{-1} R_{\th}(\xi) }|^2}{\z^2} 
	\de_{\Vv}(\vz + r(\cos(\phi),\sin(\phi))) \ftheta(\th) \fz(\z) \fr(r)  
	\: \d\th \d\z \d r \d\phi.
\end{align*}
Therefore when $\sigma \rightarrow 0$, one has in the sense of generalized functions
\eq{
	|\hat g_\sigma\pa{ \z^{-1} R_{\th}(\xi) }|^2 \rightarrow \de_{\Bb}(\th,\z)
	\qwhereq
	\Bb = \enscond{(\th,\z) }{ \z^{-1} R_{\th}(\xi) = \xiz }.
}
Observing that
$\de_{\Vv}(\speed) \de_{\Bb}(\th,\z) = \de_{\Cc}(\th,\z,r)$ where
\eq{
	\Cc = \enscond{(\th,\z,r)}{
		\z = \norm{\xi}, \; 
		\th =  \angle{\xi}, \; 
		r = -\frac{\tau}{ \norm{\xi}\cos(\angle{\xi}-\phi) } - \frac{\vzMod \cos(\angle{\xi} - \vzAng)}{\cos(\angle{\xi}-\phi)}
	}
}
one obtains the desired formula.
\end{proof}

\begin{proof}[Proof of Proposition~\ref{prop-spde-translation}]
\label{proof:prop-spde-translation}
	This follows by computing the derivative in time of the warping equation~\eqref{eq-time-warping}, denoting $y \eqdef x+\vz t$
	\begin{align*}
		\partial_t I_0(x,t) &= \partial_t I(y,t) + \dotp{\nabla I(y,t)}{\vz}, \\
		\partial_t^2 I_0(x,t) &= \partial_t^2 I(y,t) + 2\dotp{\partial_t\nabla I(y,t)}{\vz}
			+ \dotp{ \partial_x^2 I(y,t) \vz }{\vz} 
	\end{align*}
	and plugging this into equation~\eqref{eq-spde} after remarking that the distribution of $\pd{W}{t}(x,t)$ is the same as the distribution of $\pd{W}{t}(x-\vz t, t)$. 
\end{proof}

\begin{proof}[Proof of Proposition~\ref{prop-spde-equiv}]
\label{proof:prop-spde-equiv}
	\newcommand{\MC}{\text{\tiny MC}}
	For this proof, we denote $I^\MC$ the motion cloud defined by equation~\eqref{eq-dfn-mc-spectrum}, and $I$ a stationary solution of the sPDE defined by equation~\eqref{eq-spde}, which exists according to Theorem~\ref{thm-sol-eq} because $\hsiW^2 \hat\nu^3 \in L^1$. Indeed $\fz$ and $\ftheta$ are probability distributions and $\xi \mapsto \frac{1}{\norm{\xi}^2}$ does not change the continuity at $0$. We aim to show that under the specification~\eqref{eq-equiv-spde-mc}, they have the same covariance. This is equivalent to showing that $I_0^\MC(x,t) =  I^\MC(x+ct,t)$ has the same covariance as $I_0$. 
	For any fixed $\xi$, equation~\eqref{eq-spde-freq} admits a unique stationary solution $\hat I_0(\xi,\cdot)$ (Theorem \ref{thm-sol-eq}) which is a stationary Gaussian process of zero mean and with a covariance which is $\hsiW^2(\xi)  r \star \bar r$, where $r$ is the impulse response (i.e. formally taking $a=\de$) of the ODE $r'' + 2r'/u + r/u^2 = a$, where we denoted $u=\hat\nu(\xi)$. This impulse response can be shown to be $r(t) = t e^{-t/u} \one_{\RR^+}(t)$. The covariance of $\hat I_0(\xi,\cdot)$ is thus, after some computation, equal to $\hsiW^2(\xi)  r \star \bar r = \hsiW^2(\xi) h(\cdot/u)$, where $h(t) = \frac{u^3}{4}(1+|t|)e^{-|t|}$.
	Taking the Fourier transform of this equality, the power spectrum $\hat\ga_0$ of $I_0$ thus reads
	\eql{\label{eq-powerspec-spde}
		\hat \ga_0(\xi,\tau) = \frac{1}{4}\hsiW^2(\xi) \hat\nu(\xi)^3 \tilde h(\hat\nu(\xi) \tau)
		\qwhereq
	\tilde h(s) = \frac{1}{(1+s^2)^2}
	}
	and where it should be noted that this function $h$ is the same as the one introduced in equation~\eqref{eq-expression-fr}.
	%
	The covariance $\ga^\MC$ of $I^\MC$ and $\ga_0^\MC$ of $I_0^\MC$ are related by the relation 
	\eql{\label{eq-powerspec-MC}		
		\hat \ga_0^{\MC}(\xi,\tau) = \hat \ga^{\MC}(\xi,\tau-\dotp{\xi}{\vz}) = 
		\frac{ 1 }{\norm{\xi}^2} \fz( \norm{\xi} )
		\ftheta\pa{ \angle{\xi}  } 
		\hat h\pa{-\frac{\tau }{  \sr \norm{\xi} } }.
	}
	where we used the expression~\eqref{eq-dfn-mc-spectrum} for $\hat \ga^{\MC}$ and the value of $\Ll(\fr)$ given by equation~\eqref{eq-expression-fr}.
	Condition~\eqref{eq-equiv-spde-mc} guarantees that expression~\eqref{eq-powerspec-spde} and~\eqref{eq-powerspec-MC} coincide, and thus $\hat \ga_0=\hat \ga_0^{\MC}$.
\end{proof}

\begin{proof}[Proof of Proposition~\ref{prop:psycurve}]
\label{proof:prop-psycurve}
One has the closed form expression for the MAP estimator
\eq{
	\hat \q(\m) = \m + a_\nuis \sigma^2_\nuis, 
}
and hence, denoting $\Nn(\mu,\sigma^2)$ the Gaussian distribution of mean $\mu$ and variance $\sigma^2$, 
\eq{
	\hat\q(\M_{\q,\nuis}) \sim \Nn(\q+a_\nuis \sigma^2_\nuis, \sigma^2_\nuis)
}
where $\sim$ means equality of distributions.
One thus has 
\eq{
	\hat\q(\M_{\q,\nuis^\star}) - \hat\q(\M_{\q^\star,\nuis})
		\sim
		\Nn( \q-\q^\star + a_{\nuis^\star} \sigma_{\nuis^\star}^2  - a_{\nuis} \sigma_{\nuis}^2, 
		\sigma_{\nuis^\star}^2 + \sigma_{\nuis}^2 ),
} 
which leads to the results by taking expectation. 
\end{proof}

\section{Complementary Results}
\label{ap:comp-res}

In the following, we states two complementary results regarding our sPDE formulation of MCs. Theorem~\ref{thm-sol-eq} gives necessary conditions for equation~\eqref{eq-spde} to have solutions. More general results can be found in~\textcite{UnserBook}.

\begin{thm}
\label{thm-sol-eq}
	If $(\hat \al, \hat \be)$ are non-negative and $ \frac{\hsiW^2 }{\hat\al \hat\be} \in L^1$, then equation~\eqref{eq-spde} has a unique causal and stationary solution, \textit{ie} uniquely defines the distribution of a sPDE cloud.
\end{thm}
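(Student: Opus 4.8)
The plan is to diagonalize the sPDE~\eqref{eq-spde} over the spatial frequency variable and reduce the statement to a family of scalar second-order stochastic ODEs handled by classical linear theory. First I would apply the spatial Fourier transform---interpreted within the framework of generalized stochastic processes of~\textcite{unser2014unified,gelfand1964generalized}, since the driving term is only white in time---to rewrite~\eqref{eq-spde} as the decoupled family~\eqref{eq-spde-freq}: for each $\xi \in \RR^2$,
\eq{
	\pdd{\hat I_0(\xi,t)}{t} + \hat\al(\xi)\pd{\hat I_0(\xi,t)}{t} + \hat\be(\xi)\hat I_0(\xi,t) = \hsiW(\xi)\hat w(\xi,t),
}
a scalar equation with coefficients that are constant in $t$, driven by a temporal white noise. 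I would then \emph{define} a causal stationary solution of~\eqref{eq-spde} as a Gaussian field whose spatial Fourier transform restricts, for a.e.\ $\xi$, to the unique causal stationary solution of this scalar equation, so that uniqueness of the law of $I_0$ reduces to mode-wise uniqueness, provided the resulting covariance is that of a genuine GSP.

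Second, I would treat the scalar equation for fixed $\xi$. Writing $a = \hat\al(\xi) \geq 0$, $b = \hat\be(\xi) \geq 0$ and $P_\xi(\lambda) = \lambda^2 + a\lambda + b$, the Routh--Hurwitz criterion in degree two gives that $P_\xi$ has both roots strictly in the left half-plane precisely when $a>0$ and $b>0$; in that case the causal Green's function $r_\xi$ (solving $r_\xi'' + a r_\xi' + b r_\xi = \de$) decays exponentially, so $r_\xi \in L^1 \cap L^2$, and the stochastic convolution $\hat I_0(\xi,t) = \hsiW(\xi)\int_{-\infty}^{t} r_\xi(t-s)\,\d b_\xi(s)$ is the unique causal stationary solution: any other solution differs by a solution of the homogeneous ODE, i.e.\ a transient $c_1 e^{\lambda_1 t} + c_2 e^{\lambda_2 t}$, which cannot be stationary unless $c_1=c_2=0$. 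This stationary Gaussian process has power spectrum (in the temporal frequency $\tau$) proportional to
\eq{
	\hat\ga_0(\xi,\tau) = \frac{\hsiW(\xi)^2}{|P_\xi(\imath\tau)|^2} = \frac{\hsiW(\xi)^2}{(\hat\be(\xi)-\tau^2)^2 + \hat\al(\xi)^2\tau^2},
}
which is exactly the temporal factor already encountered in the proof of Proposition~\ref{prop-spde-equiv}.

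Third, I would check that these mode-wise solutions assemble into a bona fide finite-variance field by integrating the power spectrum. A classical residue computation for the damped harmonic oscillator gives $\int_\RR \big[(b-\tau^2)^2 + a^2\tau^2\big]^{-1}\,\d\tau = \pi/(ab)$ for $a,b>0$, hence
\eq{
	\iint_{\RR^2\times\RR}\hat\ga_0(\xi,\tau)\,\d\tau\,\d\xi \;\propto\; \int_{\RR^2}\frac{\hsiW(\xi)^2}{\hat\al(\xi)\hat\be(\xi)}\,\d\xi < +\infty
}
by the hypothesis $\hsiW^2/(\hat\al\hat\be) \in L^1$. Thus $\hat\ga_0$ is the power spectrum of a well-defined zero-mean stationary Gaussian generalized random field, which is the sought causal stationary solution, and its law is unique by the mode-wise uniqueness above.

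I expect the main obstacle to be rigour rather than computation, on two fronts. First, the Fourier decoupling and the very notion of ``solution'' must be set up carefully in the GSP framework: one must justify that the spatial Fourier transform commutes with the temporal operator $\Dd$ on generalized processes, and that integrability of $\hat\ga_0$ is the correct certificate that the family of mode-wise Gaussian processes glues into a single stationary field. Second, the hypothesis only asks $\hat\al,\hat\be \geq 0$, whereas degree-two Hurwitz stability needs strict positivity; on the exceptional set $\{\hat\al\hat\be = 0\}$ (e.g.\ $\xi=0$ for the parameterization of Section~\ref{sec-spde}) the scalar ODE has no asymptotically stable Green's function, so one must observe that $\hsiW^2/(\hat\al\hat\be)\in L^1$ forces $\hsiW$ to vanish there, making those modes identically zero and hence harmless for existence, uniqueness, and the integrability bound.
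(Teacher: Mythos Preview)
Your argument is correct and follows essentially the same route as the paper's own proof: spatial Fourier transform to decouple into the scalar family~\eqref{eq-spde-freq}, mode-wise existence and uniqueness of a causal stationary solution (the paper outsources this step to Theorem~3.3 of~\textcite{brockwell2009existence}, whereas you work it out explicitly via the Green's function and the transient argument), then integrability of the power spectrum to justify the inverse Fourier transform. Your explicit residue computation $\int_\RR[(b-\tau^2)^2+a^2\tau^2]^{-1}\,\d\tau=\pi/(ab)$ and your handling of the degenerate modes $\{\hat\al\hat\be=0\}$ are both correct refinements that the paper's one-paragraph sketch leaves implicit.
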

\begin{proof} 
Consider equation~\eqref{eq-spde-freq}, the Fourier transform of equation~\eqref{eq-spde}, which has causal and stationary solutions (see the general case of Levy-driven sPDE, Theorem 3.3 in~\textcite{brockwell2009existence}). Given that $ \frac{\hsiW }{\hat\al \hat\be} \in L^1 $, these solutions have an integrable spatial power spectrum. Then, one could take their inverse Fourier transform and get the solution, which is unique by construction.
\end{proof}
\begin{rem} 
There are different ways to define uniqueness of solution for sPDE. In Theorem~\ref{thm-sol-eq}, uniqueness has to be understood in terms of sample path, meaning that two solutions $(X, \tilde X)$ of equation~\eqref{eq-spde} verify $\PP(\forall t \in \RR, X_t = \tilde X_t)=1$. This notion of uniqueness is strong and it implies uniqueness in distribution \textit{ie.} that $X$ and $\tilde X$ have the same law. 
\end{rem}

The following proposition gives a closed-form expression for the function $\Ll^{-1}(h)$ where $\Ll$ is defined in equation~\eqref{eq:linear-trans-spe} and $h$ is defined in equation~\eqref{eq-expression-fr}. In particular, we show that it is a function in $L^1(\RR)$, \textit{ie} it has a finite integral, which can be normalized to unity to define a density distribution. 
Figure~\ref{fig:Ltransform} shows a graphical display of this distribution.
\begin{prop}
\label{prop:inverse}
One has
\eq{
	\Ll^{-1}(h)(u) = \frac{2-u^2}{\pi(1+u^2)^2}-\frac{u^2(u^2+4)(\log(u)-\log(\sqrt{u^2+1}+1))}{\pi(u^2+1)^{5/2}}.
}
In particular, one has
\eq{
	\Ll^{-1}(h)(0) = \frac{2}{\pi}
	\qandq
	\Ll^{-1}(h)(u) \sim \frac{1}{2\pi u^3}
	\quad\text{when}\quad
	u \rightarrow +\infty.
}
\end{prop}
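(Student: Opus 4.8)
The plan is to recognize $\Ll$ as a rescaled version of the classical Abel integral transform and to invert it with the standard Abel inversion formula; the explicit expression for the particular kernel $h$ then reduces to one quadrature.

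First I would put $\Ll$ in Abel form. Using $\cos(\pi-\phi)=-\cos(\phi)$ together with the $\phi$-parity of the integrand, the defining formula~\eqref{eq:linear-trans-spe} rewrites as $\Ll(f)(u)=2\int_{0}^{\pi/2}\pa{f(-u/\cos\phi)+f(u/\cos\phi)}\,\d\phi$, so only the even part $f_{\mathrm{e}}$ of $f$ contributes; since $h$ is even this is harmless. For $u>0$ the substitution $s=u/\cos(\phi)$, under which $s$ ranges over $[u,+\infty)$ and $\d\phi=\frac{u}{s\sqrt{s^2-u^2}}\,\d s$, turns this into
\[
	\Ll(f)(u) = c_0\, u\int_{u}^{+\infty}\frac{f_{\mathrm{e}}(s)}{s\sqrt{s^2-u^2}}\,\d s ,
\]
with an explicit constant $c_0$ fixed by comparison with~\eqref{eq:linear-trans-spe}. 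The further substitution $w=u^2$, $\sigma=s^2$ recasts this as the classical Abel equation $G(w)=\int_{w}^{+\infty}\frac{F(\sigma)}{\sqrt{\sigma-w}}\,\d\sigma$, where $F(\sigma)\eqdef f_{\mathrm{e}}(\sqrt\sigma)/\sigma$ and $G$ is the corresponding rescaling of $\Ll(f)$.

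I would then invert via the Abel inversion formula $F(\sigma)=-\tfrac1\pi\,\tfrac{\d}{\d\sigma}\int_{\sigma}^{+\infty}\frac{G(w)}{\sqrt{w-\sigma}}\,\d w=-\tfrac1\pi\int_{\sigma}^{+\infty}\frac{G'(w)}{\sqrt{w-\sigma}}\,\d w$, which is legitimate here because the relevant $G$ is smooth on $(0,+\infty)$, decays fast enough at $+\infty$, and has only an integrable singularity at $0$. For $f=\Ll^{-1}(h)$ one needs $G$ to be the rescaled $h$, i.e. $G(w)\propto w^{-1/2}(1+w)^{-2}$; differentiating, $G'(w)$ decomposes by partial fractions in $w$ into a combination of $w^{-3/2}(1+w)^{-2}$ and $w^{-1/2}(1+w)^{-3}$, and the substitution $w-\sigma=t^2$ reduces $\int_{\sigma}^{+\infty}G'(w)(w-\sigma)^{-1/2}\,\d w$ to elementary integrals of the form $\int_{0}^{+\infty}(\sigma+t^2)^{-3/2}(1+\sigma+t^2)^{-n}\,\d t$. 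The purely rational contributions assemble into the algebraic term $\frac{2-u^2}{\pi(1+u^2)^2}$, while the contributions that retain a $(1+\sigma+t^2)^{-1}$ factor integrate to logarithmic terms and produce $-\frac{u^2(u^2+4)\pa{\log u-\log(\sqrt{u^2+1}+1)}}{\pi(u^2+1)^{5/2}}$, once one substitutes back $\sigma=u^2$ and uses $f(u)=u^2 F(u^2)$.

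The main obstacle is exactly this last quadrature: correctly tracking the overall normalization through the two changes of variables, and carrying out the partial-fraction integration carefully enough to land on the stated logarithmic term with the right coefficient; the rest (the two substitutions and the justification of Abel inversion in this non-compactly-supported, decaying setting) is routine. Once the closed form is established, the remaining claims follow by inspection: at $u=0$ the prefactor $u^2$ forces the logarithmic term to vanish (since $u^2\log u\to0$), leaving $\Ll^{-1}(h)(0)=\frac{2}{\pi}$; and the behavior as $u\to+\infty$ is obtained by Taylor-expanding the closed form (using that $\log u-\log(\sqrt{u^2+1}+1)$ has the expansion $-1/u+O(u^{-3})$ together with the expansions of $(1+u^2)^{-2}$ and $(u^2+1)^{-5/2}$). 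Since the closed form is bounded near $0$ and decays like an inverse power at $+\infty$, it belongs to $L^1(\RR)$ and may be normalized to the density appearing in~\eqref{eq-expression-fr}. As an independent check, one could instead verify $\Ll\pa{\Ll^{-1}(h)}=h$ directly by plugging the claimed expression into~\eqref{eq:linear-trans-spe} — the same quadrature performed in the opposite direction.
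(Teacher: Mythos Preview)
Your reduction to an Abel equation is correct and is a genuinely different route from the paper's. The paper simply substitutes $x=\cos\phi$ to obtain
\[
  \Ll(f)(u)=2\int_{0}^{1} f\!\pa{-\tfrac{u}{x}}\,\frac{x}{\sqrt{1-x^2}}\,\frac{\d x}{x},
\]
recognises this as a Mellin convolution of $f$ against the kernel $x\mapsto x(1-x^2)^{-1/2}\one_{(0,1)}(x)$, and invokes the Mellin convolution tables of Oberhettinger to read off the inverse. Your alternative --- the further substitution $s=u/\cos\phi$ followed by $w=u^2,\ \sigma=s^2$, landing on the half-line Abel equation $G(w)=\int_w^{\infty}F(\sigma)(\sigma-w)^{-1/2}\,\d\sigma$ and inverting with the explicit Abel formula --- is self-contained and avoids the table lookup, at the price of the elementary but lengthy quadrature you describe. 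Both routes are standard and lead to the same closed form; the Mellin route is shorter if one is willing to cite tables, while yours makes the mechanism more transparent.

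There is one genuine gap in your sketch, concerning the large-$u$ asymptotics. You assert that the behaviour $\sim\tfrac{1}{2\pi u^3}$ follows by Taylor expansion, but if you actually carry the expansion out you will see this cannot hold for the closed form as stated. Indeed, $\log u-\log(\sqrt{u^2+1}+1)=-\operatorname{arcsinh}(1/u)$, whose series $-u^{-1}+\tfrac{1}{6}u^{-3}-\cdots$ contains only odd powers of $1/u$; and $u^2(u^2+4)(u^2+1)^{-5/2}=u^{-1}+\tfrac{3}{2}u^{-3}+\cdots$ likewise contains only odd powers. Their product, and hence the whole second term, has only even powers of $1/u$, as does the first term $\tfrac{2-u^2}{\pi(1+u^2)^2}$. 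The leading $u^{-2}$ contributions from the two terms cancel exactly, and the first surviving order is $u^{-4}$ (with coefficient $\tfrac{16}{3\pi}$), not $u^{-3}$. So either the closed form carries an unnoticed normalisation slip, or the stated $u^{-3}$ asymptotic is in error; in any case, ``follows by Taylor expansion'' does not suffice here and you should flag the discrepancy rather than claim the expansion goes through.
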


\begin{proof}
 The variable substitution $x=\cos(\phi)$ can be used to rewrite~\eqref{Ltransform} as $$ \forall u \in \RR,  \quad \Ll(h)(u) = 2\int_{0}^{1} h\left(-\frac{u}{x} \right) \frac{x}{\sqrt{1-x^2}} \frac{\d x}{x}.$$ 
In such a form, we recognize a Mellin convolution which could be inverted by the use of the Mellin convolution table~\citep{oberhettinger2012tables}.
\end{proof}

\begin{figure}[!ht]
\begin{center}
\includegraphics[scale=0.5]{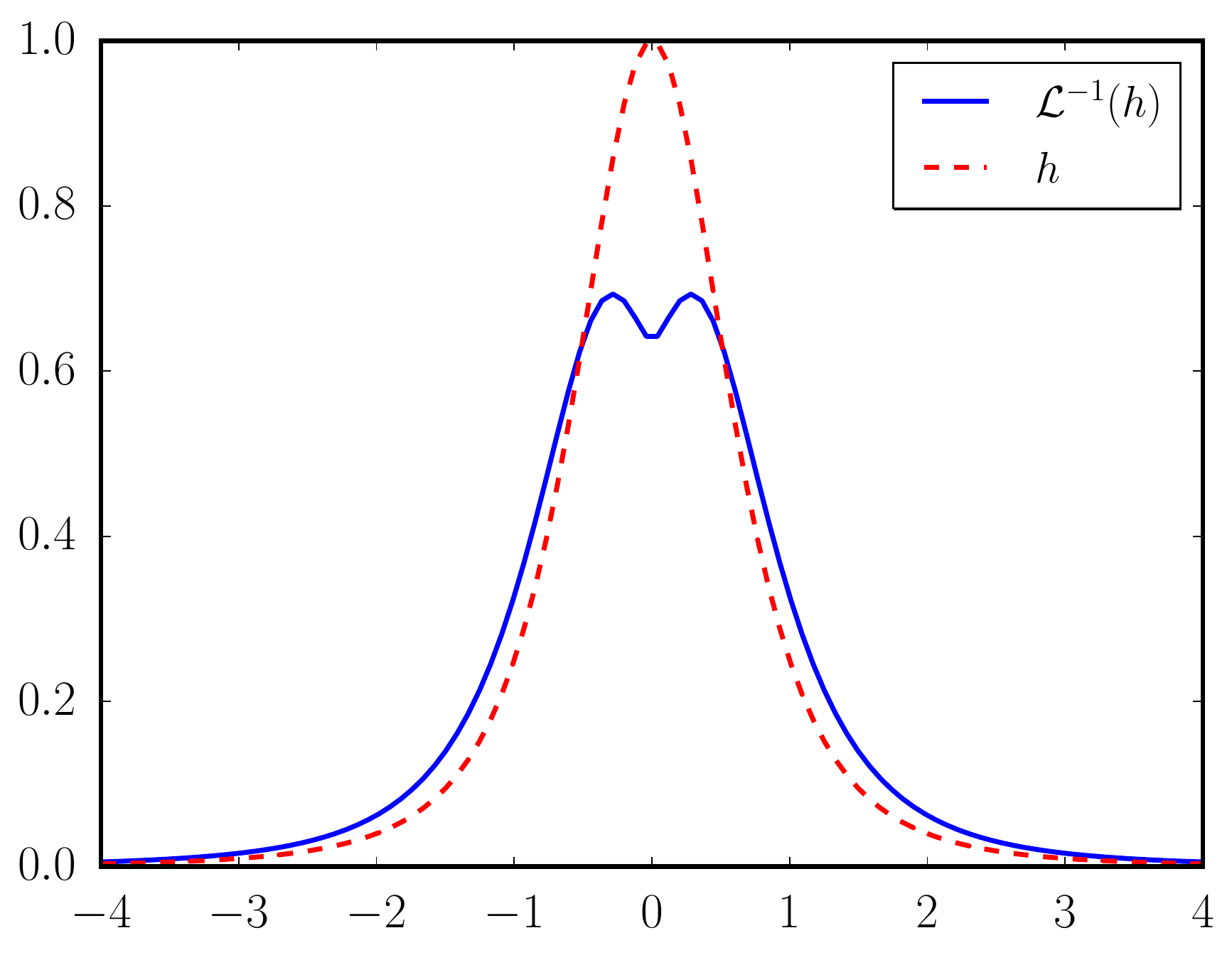}
\caption{Functions $h$ and $\Ll^{-1}(h)$.}
\label{fig:Ltransform}
\end{center}
\end{figure}

\section{Alternative Parametrization of Zoom Distribution}
\label{sec:appendix}

In practice, the parameters $(\tlnz, \tlnsz)$ are not convenient to manipulate because they have no direct physical meaning.
\paragraph{Parametrization of $\fz$ by mode and standard deviation}
We propose a more intuitive parametrization using mode and standard deviation $(\lnz,\lnsz)$ 
\eq{
	\lnz \eqdef \argmax_{\z} \fz(\z)
	\qandq
	\lnsz^2 \eqdef \EE( \Z^2) - \EE(\Z)^2 . 
}
Once $(\lnz,\lnsz)$ are fixed, it is easy to compute the corresponding $(\tlnz,\tlnsz)$ to plug into expression~\eqref{eq-biosinspired-fz}, simply by solving a polynomial equation~\eqref{eq-relation-mode}, as detailed in the following proposition.   

\begin{prop}
One has
\eq{
	\lnz = \frac{\tlnz}{1+\tlnsz^2} \qandq \lnsz^2 = \tlnz^2 \tlnsz^2 (1+\tlnsz^2).
} 
Such a formula can be inverted by finding the unique positive root of 
\eql{\label{eq-relation-mode}
	\tlnsz^2 (1+\tlnsz^2)^3 - \frac{\lnsz^2}{\lnz^2} = 0 
	\qandq \tlnz = \lnz(1+\tlnsz^2).
}
\end{prop}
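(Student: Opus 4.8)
The plan is to observe that the density $\fz$ in equation~\eqref{eq-biosinspired-fz} is exactly that of a log-normal law, and then to deduce every identity in the statement from the classical closed-form expressions for the mode, mean and variance of such a law. Concretely, I would write $\Z = e^Y$ with $Y \sim \Nn(\mu,s^2)$, whose density is $\z \mapsto \frac{1}{\z s\sqrt{2\pi}}\, e^{-(\ln\z-\mu)^2/(2s^2)}$. Matching this, up to the multiplicative normalizing constant (which is precisely why~\eqref{eq-biosinspired-fz} carries the harmless factor $\tlnz$ in its numerator, and which is finite so that $\fz$ can indeed be normalized) with the right-hand side of~\eqref{eq-biosinspired-fz}, forces
\eq{
	\mu = \ln(\tlnz)
	\qandq
	s^2 = \ln(1+\tlnsz^2),
	\quad\text{hence}\quad
	e^\mu = \tlnz, \quad e^{s^2} = 1+\tlnsz^2 .
}

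Next I would recall that for $Y \sim \Nn(\mu,s^2)$ the moment generating function is $\EE(e^{tY}) = e^{t\mu + t^2 s^2/2}$, so that $\EE(\Z) = e^{\mu + s^2/2}$ and $\EE(\Z^2) = e^{2\mu + 2s^2}$, while the mode of $\Z$, found by cancelling the derivative of $\log\fz$, equals $e^{\mu - s^2}$. Substituting the identifications above yields immediately
\eq{
	\lnz = e^{\mu - s^2} = \frac{\tlnz}{1+\tlnsz^2}
}
and
\eq{
	\lnsz^2 = \EE(\Z^2) - \EE(\Z)^2 = e^{2\mu+2s^2} - e^{2\mu+s^2}
	= \tlnz^2(1+\tlnsz^2)\big((1+\tlnsz^2)-1\big) = \tlnz^2\,\tlnsz^2\,(1+\tlnsz^2),
}
which are the two announced forward relations.

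Finally, for the inversion I would rearrange the first relation as $\tlnz = \lnz(1+\tlnsz^2)$, substitute into the second to get $\lnsz^2 = \lnz^2\,\tlnsz^2\,(1+\tlnsz^2)^3$, i.e. equation~\eqref{eq-relation-mode}, and then establish uniqueness of the positive root by noting that $w \mapsto w(1+w)^3$ is a strictly increasing continuous bijection from $(0,\infty)$ onto $(0,\infty)$; for the given value $\lnsz^2/\lnz^2 > 0$ there is therefore exactly one admissible $\tlnsz^2$, after which $\tlnz$ is uniquely recovered from $\tlnz = \lnz(1+\tlnsz^2)$. There is no real obstacle in this argument; the only point demanding a little care is keeping the two parametrizations $(\tlnz,\tlnsz)$ and $(\mu,s)$ consistent throughout and invoking the strict monotonicity that guarantees a unique root.
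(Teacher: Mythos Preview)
Your proof is correct and follows essentially the same approach as the paper, which simply cites the standard log-normal formulas from \textcite{johnson1994continuous} and notes that the inversion is elementary arithmetic. You have spelled out the details (identifying the log-normal parameters, computing mode and variance, and checking monotonicity for uniqueness) that the paper leaves implicit.
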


\begin{proof}
The primary relations are established using standard calculations from the probability density function $\fz$~\citep{johnson1994continuous}. The relations~\eqref{eq-relation-mode} follow standard arithmetic. 

\end{proof}

\paragraph{Parametrization of $\fz$ by mode and octave bandwidth}

Differences in perception are often more relevant in a log domain, therefore it is useful to parametrize  $\fz$ by its mode $\lnz$ and octave bandwidth $\lnbw$, which is defined by 
\eq{
	\lnbw \eqdef \frac{ \ln\left( \frac{z_+}{z_- }\right) }{ \ln(2) }
}
where $(z_-,z_+)$ are respectively the successive half-power cutoff frequencies, which verify $ \fz(z_-) = \fz(z_+) = \frac{\fz(\lnz)}{2} $ with $ z_- <  z_+$.
\begin{prop}
\label{prop-bandwidth-var}
One has
\eql{\label{eq-relation-bw}
	\lnbw = \sqrt{\frac{8\ln(1+\tlnsz^2)}{\ln(2)}} 
	\quad \text{and conversely}\quad
	\tlnsz = \sqrt{\exp\left(\frac{\ln(2)}{8} \lnbw^2 \right) - 1}.
}

\end{prop}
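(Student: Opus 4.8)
The plan is to recognise the density $\fz$ in equation~\eqref{eq-biosinspired-fz} as a log-normal law, compute its mode, solve the half-power equation in closed form, and then invert the resulting identity. Setting $\mu \eqdef \ln(\tlnz)$ and $s^2 \eqdef \ln(1+\tlnsz^2)$, the density~\eqref{eq-biosinspired-fz} is proportional to $\frac1z \exp\pa{-\frac{(\ln z-\mu)^2}{2s^2}}$, i.e. it is exactly the density of a log-normal variable. Differentiating $\ln\fz(z) = \mathrm{const} - \ln z - \frac{(\ln z - \mu)^2}{2s^2}$ and setting the derivative to zero yields $\ln(\lnz) = \mu - s^2$; this is the mode already recorded in the previous proposition, and I will use it to re-center the computation.

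Next I would introduce the centered log-variable $v \eqdef \ln z - \ln(\lnz)$, so that $\ln z - \mu = v - s^2$. Plugging this into $\ln\fz(z)$ and expanding $(v-s^2)^2$, the linear term $+v$ produced by $-\frac{(\ln z-\mu)^2}{2s^2}$ cancels exactly the $-v$ coming from the $\frac1z$ prefactor, leaving the clean identity
\[
	\ln\fz(z) - \ln\fz(\lnz) = -\frac{v^2}{2 s^2}.
\]
The half-power condition $\fz(z) = \tfrac12\fz(\lnz)$ then becomes $\frac{v^2}{2s^2} = \ln 2$, whose two roots are $v = \pm s\sqrt{2\ln 2}$. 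Hence $\ln(z_\pm) = \ln(\lnz) \pm s\sqrt{2\ln 2}$ with $z_- < z_+$, and therefore $\ln(z_+/z_-) = 2 s\sqrt{2\ln 2}$.

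Finally, by definition $\lnbw = \ln(z_+/z_-)/\ln 2 = 2s\sqrt{2\ln 2}/\ln 2 = \sqrt{8 s^2/\ln 2}$, which is precisely~\eqref{eq-relation-bw} after substituting back $s^2 = \ln(1+\tlnsz^2)$. Solving this relation for $\tlnsz$ (square, isolate $\ln(1+\tlnsz^2)$, exponentiate) gives the stated converse formula $\tlnsz = \sqrt{\exp\pa{\tfrac{\ln 2}{8}\lnbw^2}-1}$. There is no genuinely hard step here; the only point requiring care is the bookkeeping in the middle step, where the $\frac1z$ factor of the log-normal density would otherwise make the cutoff equation asymmetric — expanding around the mode $\lnz$ rather than around the scale parameter $\tlnz$ is exactly what makes that linear term vanish and produces the symmetric quadratic $v^2 = 2s^2\ln 2$.
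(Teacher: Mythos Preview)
Your proof is correct and follows essentially the same strategy as the paper: write the half-power condition as a quadratic in the centered log-variable $X=\ln(z/\lnz)$ and read off $\lnbw=(X_+-X_-)/\ln 2$. Your version is slightly cleaner, since by explicitly expanding around the mode you show that the linear term cancels and obtain the symmetric equation $v^2=2s^2\ln 2$ directly, whereas the paper simply states the quadratic $Q(X)$ and leaves the extraction of $\lnbw$ implicit.
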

\begin{proof}
Using the fact that $ \fz(z_-) = \fz(z_+) = \frac{\fz(\lnz)}{2} $, one shows that $ X_+ = \ln \left( \frac{z_+}{\lnz} \right) $ and $ X_- = \ln \left( \frac{z_-}{\lnz} \right) $ are the two roots of the following polynomial (with $ X_- \le  X_+$).
\eq{
	Q(X) = X^2 + 2 \ln(1+\tlnsz^2) X - 2 \ln(2) \ln(1+\tlnsz^2) + \frac{1}{2} \ln(1 + \tlnsz^2)^2 
}
This allows to compute $\lnbw$.
\end{proof}
Using Proposition~\ref{prop-bandwidth-var}, it is possible to obtain the parametrization of bandwidth prevalent in manipulations used in visual psychophysics experiments.

\end{document}